\newtheorem{theorem}{Theorem}
\definecolor{Red}{rgb}{1,0,0}
\newcommand{\xfrac}[2]{{#1}/{#2}}
\newcommand{\od}{\eta} 
\newcommand{\intall}{\ensuremath{\int_{-\infty}^\infty}}
\newcommand{\sn}[1]{\bar{#1}} 
\newcommand{\elff}{\ensuremath{\alpha\ini(z,t)}}
\newcommand{\elfb}{\ensuremath{\alpha\fin(z,t)}}
\newcommand{\betf}{\ensuremath{\beta\ini(z,t)}}
\newcommand{\betb}{\ensuremath{\beta\fin(z,t)}}
\newcommand{\elfw}{\ensuremath{\tilde{\alpha}^{}_\rightleftarrows}(z,\omega)}
\newcommand{\betw}{\ensuremath{\tilde{\beta}^{}_\rightleftarrows}(z,\omega)}
\newcommand{\elfwf}{\ensuremath{\tilde{\alpha}\ini}(z,\omega)}
\newcommand{\elfwb}{\ensuremath{\tilde{\alpha}\fin}(z,\omega)}
\newcommand{\betwf}{\ensuremath{\tilde{\beta}\ini}(z,\omega)}
\newcommand{\betwb}{\ensuremath{\tilde{\beta}\fin}(z,\omega)}
\newcommand{\elfwbc}{\ensuremath{\tilde{\alpha}_{\rm in}}(\omega)}
\newcommand{\ini}{_{\rightarrow}}
\newcommand{\fin}{_{\leftarrow}}
\newcommand{\ett}{\ensuremath{\tau_T}} 
\newcommand{\ets}{\ensuremath{\tau_S}} 
\newcommand{\etav}{\ensuremath{\tau_{\rm av}}} 
\newcommand{\bv}{L} 
\newcommand{\gd}{\ensuremath{t_g}} 
\newcommand{\wt}{\ensuremath{t_W}} 
\newcommand{\sdt}{\ensuremath{t_S}} 
\renewcommand{\erf}[1]{Eq.~(\ref{#1})}
\newcommand{\erfs}[2]{Eqs.~(\ref{#1})--(\ref{#2})}
\newcommand{\erfa}[2]{Eqs.~(\ref{#1}) and (\ref{#2})}
\newcommand{\arf}[1]{{Appendix}~\ref{#1}} 
\newcommand{\srf}[1]{Sec.~\ref{#1}}
\newcommand{\frf}[1]{Fig.~\ref{#1}}
\newcommand{\ea}{{\it et al.}}
\newcommand{\nn}{\nonumber}
\begin{document}
\title{How much time does a photon spend as an atomic excitation before being transmitted?}

\author{Kyle Thompson}\email[]{kthompson@physics.utoronto.ca}
\affiliation{Department of Physics, and Centre for Quantum Information and Quantum Control,  University of Toronto, 60 St. George Street, Toronto, Ontario, Canada M5S 1A7}
\author{Kehui Li}
\altaffiliation[Present address: ]{Department of Physics, Harvard University, and MIT-Harvard Center for Ultracold Atoms, Cambridge, MA 02138, USA}
\affiliation{Department of Physics, and Centre for Quantum Information and Quantum Control,  University of Toronto, 60 St. George Street, Toronto, Ontario, Canada M5S 1A7}
\author{Daniela Angulo}
\affiliation{Department of Physics, and Centre for Quantum Information and Quantum Control,  University of Toronto, 60 St. George Street, Toronto, Ontario, Canada M5S 1A7}
\author{Vida-Michelle Nixon}
\affiliation{Department of Physics, and Centre for Quantum Information and Quantum Control,  University of Toronto, 60 St. George Street, Toronto, Ontario, Canada M5S 1A7}
\author{Josiah Sinclair}
\affiliation{Department of Physics, MIT-Harvard Center for Ultracold Atoms and Research Laboratory of Electronics,
Massachusetts Institute of Technology, Cambridge, Massachusetts 02139, USA}
\author{Amal Vijayalekshmi Sivakumar}
\altaffiliation[Present address: ]{Department of Physics and Astronomy, Texas A\&M University,
578 University Dr, College Station, TX 77843, USA}
\affiliation{Department of Physical Sciences, Indian Institute of Science Education and Research Kolkata, Mohanpur 741246, India}
\author{Howard M. Wiseman}\email[]{prof.howard.wiseman@gmail.com}
\affiliation{Centre for Quantum Computation and Communication Technology (Australian Research Council), \\ Centre for Quantum Dynamics, Griffith University, Yuggera Country, Brisbane, Queensland 4111, Australia}
\author{Aephraim M.  Steinberg}
\email[]{steinberg@physics.utoronto.ca}
\affiliation{Department of Physics, and Centre for Quantum Information and Quantum Control,  University of Toronto, 60 St. George Street, Toronto, Ontario, Canada M5S 1A7}
\affiliation{Canadian Institute For Advanced Research, 180 Dundas St. W., Toronto, Ontario, Canada, M5G 1Z8}

\date{\today}

\begin{abstract}
When a single photon traverses a cloud of 2-level atoms, the average time it spends as an atomic excitation---as measured by weakly probing the atoms---can be shown to be the spontaneous lifetime of the atoms multiplied by the probability of the photon being scattered into a side mode. 
A tempting inference from this is that an average scattered photon spends one spontaneous lifetime as an atomic excitation, while photons that are transmitted spend zero time as atomic excitations.
However, recent experimental work by some of us [PRX Quantum {\bf 3}, 010314 (2022)] refutes this intuition.  
We examine this problem using the weak-value formalism and show that the time a transmitted photon spends as an atomic excitation is equal to the group delay, which can take on positive or negative values.
We also determine the corresponding time for scattered photons and find that it is equal to the time delay of the scattered photon pulse, which consists of a group delay and a time delay associated with elastic scattering, known as the Wigner time delay. This work provides new insight into the complex and surprising histories of photons travelling through absorptive media.
\end{abstract}

\maketitle

\section{Introduction}\label{sec:intro} 
The propagation of a beam of light  through a cloud of two-level atoms is among the most ubiquitous phenomena in AMO physics \cite{AllenEberlyMonograph}.   Understanding the coherent effects of light on atoms, and the slowing, refraction, absorption and scattering of light by atoms, underlies the entire field. It has been pivotal in the conceptual development of Quantum Optics \cite{loudon2000quantum,scully_zubairy_1997}, from Einstein's early analysis of wave-particle duality up through today's studies of cavity QED, nonclassical states of light, and their applications to quantum information \cite{einstein_1909, Walther_2006, Xanadu_BS2022}.  Although one might reasonably expect such a venerable problem to have no secrets left to reveal, in fact questions as seemingly straightforward as where energy is stored as it propagates through such a dielectric medium have proved controversial \cite{RYC1994,Diener1998}, and even the quantization of light in a dispersive medium was only treated rigorously in the 1990s \cite{Barnett1992,Jeffers1995}.  The problem grows thornier yet in the deep quantum regime where the incident beam is prepared in a single-photon state~\cite{KlausMolmer2020}. 

As a single photon propagates through a cloud of 2-level atoms, part of its energy is temporarily stored in the form of atomic excitations. 
One might therefore be tempted to say that the photon will `spend time inside of' the atoms. 
Given that such a system is quantum-mechanical in nature and that the state of the system will in general be described by an entangled state of photonic and atomic modes, is there any reasonable way that one could ask the naive, classically-inspired question: how much time does the photon spend inside of these atoms? 
In particular, does the answer depend on whether the photon is eventually transmitted through the cloud or scattered into a side mode?

In this work, we answer these questions by using an operational definition of this time, as follows. First, we imagine sending a single photon into the medium and using a weak probe to continuously monitor the probability of finding an atomic excitation anywhere in the medium at any given time.
Integrating this probability over all time provides a measure of the total amount of atomic excitation that occurred while the photon traversed the medium, and has units of time. 
Thus, the integral provides a reasonable---and experimentally measurable---definition of the total time a photon spent inside of the atoms.
This definition falls into the class of `dwell times' that have been discussed in contexts such as quantum tunneling and scattering~\cite{smith_lifetime_1960,hauge_tunneling_1989,ramos_measurement_2020}. 

On average, and in the limit of an arbitrarily weak measurement, the result of such a measurement is shown in \srf{subsec:average_excitation_time} to be $\etav = P_S /\Gamma$, the spontaneous lifetime of the atoms ($1/\Gamma$) multiplied by the probability of the photon being scattered into a side mode.
This expression could be taken to suggest that photons which are scattered spend one atomic lifetime as atomic excitations, while photons that are transmitted through the medium spend zero time as atomic excitations. 

This intuition inspired previous experimental work by some of us, in which Sinclair \ea\ measured the time that a transmitted photon spends as an atomic excitation by using a saturation-based Kerr nonlinearity~\cite{sinclair_measuring_2022}.
This scheme---depicted in \frf{fig:setup}---features a resonant, coherent state `pump' pulse which travels through a cloud of two-level atoms, while a weak, off-resonant CW beam probes the atoms. 
The pump causes a small degree of saturation in the medium, resulting in a cross-phase shift $\phi_{\rm probe}(t)$ being written onto the probe beam which is proportional to the amount of atomic excitation caused by the pump~\footnote{In fact this interaction is not instantaneous due to the finite propagation speed of the probe beam, so the acquired phase shift is proportional to the average excitation probability during the propagation time of the probe, which is short compared to the pulse durations considered in these experiments.}. 
By placing a single-photon detector after the atoms and post-selecting on detection, Sinclair \ea\ were able to measure the cross-phase shift caused by the average transmitted pump photon and compare this to the cross-phase shift caused by the average (not post-selected) incident pump photon. 
Note that although this experiment was performed with coherent state pulses, it has been shown that such coherent state experiments can be used to extract single-photon weak values~\cite{wiseman_obtaining_2023}.

The result of this experiment was that for a broad-band pulse and a high optical depth, the average transmitted photon spent nearly as much time as an atomic excitation as the average incident photon. 
The experiment therefore demonstrated that the simple intuition outlined above---that the average scattered photon spends one atomic lifetime as an atomic excitation, while the average transmitted photon spends zero time---is not generally true. 
In that work, it was speculated that, if transmitted photons were spending a non-zero amount of time as atomic excitations, there must be a mechanism by which photons are absorbed and then preferentially emitted in the forward direction. 
One such mechanism for `coherent forward emission' naturally arises when a broad-band pulse passes through an absorptive medium and the electric field envelope picks up a 180 degree phase flip, coherently inducing excited atoms to re-emit in the forward direction \cite{crisp1970,Costanzo2016}.
While such an explanation is plausible, the semiclassical toy model presented by Sinclair \ea\ based on this idea was incapable of properly modelling the dynamics of a post-selected quantum system. Thus, a quantum treatment is necessary to elucidate the history of transmitted photons.
 
 In this work, we present a quantum theoretical framework to calculate the time that a transmitted photon spends as an atomic excitation (\srf{sec:theory}), which we will refer to as the {\it atomic excitation time} experienced by a transmitted photon.
 The framework is based on the weak-value formalism \cite{aharonov_how_1988,Dressel_review_WV} and quantum trajectory theory \cite{dalibard_wave-function_1992,carmichael_open_1993,Wiseman2002, tsang_time-symmetric_2009, tsang_generalized_2022} and makes the striking prediction that the atomic excitation time experienced by a transmitted photon is equal to the group delay experienced by the photon (\srf{subsec:transmitted_photons}).
 This result may seem reasonable in the far-detuned limit where the group delay is positive  (i.e., the photon wavepacket is delayed compared to free-space propagation), but here we show that this equivalence also holds near resonance where the group delay is negative (`superluminal'). 

 Furthermore, in \srf{subsec:scattered_photons} we calculate the atomic excitation time experienced by scattered photons and find that it is equal to the time delay of the scattered photon pulse, which consists of a group delay plus a time delay associated with elastic scattering, known as the Wigner time delay~\cite{wigner_lower_1955,bourgain_direct_2013}. The three atomic excitation times (not post-selected, post-selected on transmission and post-selected on scattering) are compared and contrasted in various parameter regimes in~\srf{subsec:comparison_excitation_times}. Finally, in \srf{sec:discussion} we present a simple model which illustrates how such negative post-selected dwell times can be explained in terms of quantum-mechanical interference. 

\begin{figure}[h!]
\includegraphics[width=1\linewidth]
{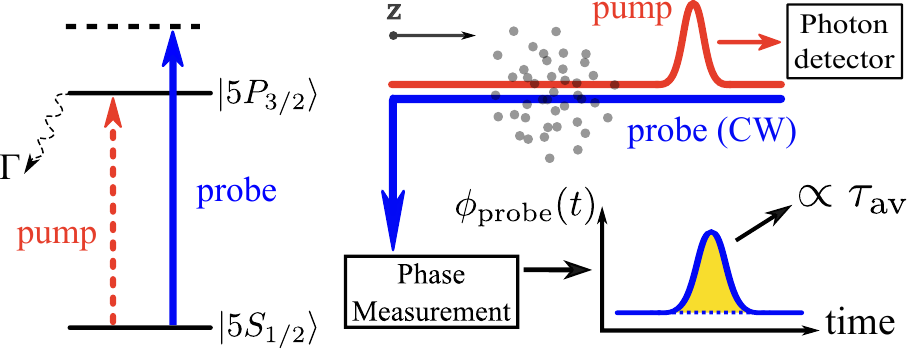}
\caption{Level scheme (left) and conceptual diagram of the phase measurement (right) used in the experiment of Sinclair \ea~\cite{sinclair_measuring_2022}. 
Resonant pump pulses faintly saturate a cloud of cold $^{85}$Rb atoms, causing the off-resonant probe beam to acquire a nonlinear cross-phase shift $\phi_{\rm probe}(t)$ which is proportional to the amount of atomic excitation in the medium at time $t$. 
Integrating this phase shift over time gives a quantity that is proportional to the atomic excitation time: $\etav$ without post-selection, and $\ett$ if one post-selects on detecting the transmitted pump photon.}
\label{fig:setup}
\end{figure}

\section{Theory}\label{sec:theory}
Consider a single photon incident on a medium of two-level atoms initially in the ground state. 
For simplicity, we will take the photon to be in a pure state, and treat the propagation one-dimensionally (along the $z$ axis). 
We also treat the atomic medium as a continuum which extends over a region $[0,\bv]$, such that at every $z\in[0,\bv]$ an atomic excitation is possible. 
Since the total number of excitations is conserved, the single excitation will appear either (i) as an axially propagating photon, (ii) as an atomic excitation or (iii) as a scattered photon. 
Since we wish to consider the case in which the photon is transmitted through the medium (i.e., by considering post-selection on that occurrence), we can remove the last of these three possibilities for the excitation in our description. 
Thus, the state of the system (medium and light) can be written as a sub-normalized (indicated by the overbar) pure state 
\begin{equation}\label{eqn:quantum_state}
\ket{\sn{\psi}\ini(t)} = \intall dz\qty[\elff \ket{p}_z + \betf \ket{e}_z ]  \,.
\end{equation}
\noindent Here $\ket{p}_z$ denotes a $\delta$-normalized state with an axial photon at position $z$ and no excitation elsewhere, and $\ket{e}_z$ denotes a $\delta$-normalized state with an atomic excitation at position $z$ and no excitation elsewhere. 
The complex coefficients $\elff$ and $\betf$ are the probability amplitudes for the excitation to be a photon or an atomic excitation at $(z,t)$, respectively, and the `$\rightarrow$' subscript indicates that the state will evolve forward in time from a normalized initial state $\ket{\psi\ini(-\infty)}$, which describes a photon yet to enter the medium. Note that $\betf$ is defined to be zero outside of $[0,\bv]$.

The interaction between the medium and the light is modelled by the dipole coupling Hamiltonian, \begin{equation}\label{eqn:Hermitian_Hamiltonian}
 \hat{H}_{\rm int} = -\hbar \int_{0}^\bv dz\, g(z)\qty[\ket{p}_z\!\bra{e}+\ket{e}_z\!\bra{p}] \, ,
\end{equation}
where the spatially-dependent coupling constant $g(z)$ takes into account the transverse beam profile and changes in the density of atoms. 
Scattering arises from there being a finite lifetime of the excited atomic state, given by $1/\Gamma$. 
The post-selection on the transmission of the photon can be effected continuously by post-selecting in each time-step on no spontaneous emission. 
According to the theory of quantum jumps~\cite{dalibard_wave-function_1992, carmichael_open_1993,wiseman_quantum_2009,minev_catch_2019}, this can be described by evolving the system in each time step $\delta t$ by a linear but non-unitary map $\hat{M}(\delta t)$ corresponding to the null result from a set of hypothetical perfect detectors looking for scattered photons. In the continuous-time limit, $\hat M(dt) = 1-i\hat{H}_{\rm NH}dt/\hbar$, where
\begin{equation}\label{eqn:non-Hermitian_Hamiltonian}
\hat{H}_{\rm NH}=\hat{H}_0+\hat{H}_{\rm int}-i\frac{\hbar\Gamma}{2}\int_{0}^\bv dz\, \ket{e}_z\!\bra{e} 
\end{equation}
is the non-Hermitian no-jump Hamiltonian, including $\hat{H}_0$, the bare Hamiltonian for the medium and the light. 
This Hamiltonian causes the excited state population amplitude to decay, but with no appearance of a zero-excitation (ground) state. That is, the forward, or pre-selected, state obeys 
\begin{equation}\label{eqn:ket_forward_evolution}
 \ket{\sn{\psi}\ini(t+dt)}=\hat{M}(dt)\ket{\sn{\psi}\ini(t)} \, ,
\end{equation}
\noindent and has a norm given by
\begin{align}\label{eqn:norm}
n\ini(t) & := \ip{\sn{\psi}\ini(t)}{\sn{\psi}\ini(t)} \notag\\ 
& = \intall dz\qty[\abs{\elff}^2+\abs{\betf}^2] \, ,
\end{align}
\noindent which decays monotonically with time. 

Under the non-Hermitian Hamiltonian in \erf{eqn:non-Hermitian_Hamiltonian}, together with the 1D propagation equation for the light and in a frame rotating at the atomic transition frequency, the coefficients in \erf{eqn:quantum_state} will evolve as
\begin{subequations}\label{eqn:howard_forward}
\begin{gather}
\qty[c\pdv{}{z}+\pdv{}{t}]\elff = ig(z)\betf \, , \label{eqn:howard_forward_alpha}\\ 
\pdv{}{t}\betf = ig(z)\elff-\frac{\Gamma}{2}\betf  \, . \label{eqn:howard_forward_beta}
\end{gather}
\end{subequations}  
The initial conditions, in the limit $t\to-\infty$, are
\begin{subequations}\label{eqn:inicond}
\begin{gather}
\elff = \alpha_{\rm in}(t-z/c)
\, , \label{eqn:inicond_alpha}\\
\betf = 0 \, .\label{eqn:inicond_beta}
\end{gather}    
\end{subequations}
Note that $\alpha_{\rm in}$ is defined so that 
at the entry point to the medium ($z=0$), we have  $\alpha\ini(0,t)=\alpha_{\rm in}(t)$.

\erfa{eqn:howard_forward_alpha}{eqn:howard_forward_beta} are equivalent to the Maxwell-Bloch equations for pulse propagation in the limit that each atom remains close to the ground state (i.e., linear propagation), and under the rotating-wave and paraxial approximations~\cite{AllenEberlyMonograph}.
In this correspondence, $ \elff$ and $\betf$ play the roles of the slowly-varying electric field envelope of the pulse and the atomic coherence, respectively. 
The coupling constant $g(z)$ is related to the atom density $N(z)$ and dipole matrix element $d_{eg}$ via 
\begin{align}
    g(z) &= d_{eg}\sqrt{\frac{ckN(z)}{2\epsilon_0\hbar}} \, . 
\end{align} 
It can also be related to the  resonant optical depth $\od_0$ via
\begin{align}
    \od_0 &= \sigma_0\int_{0}^\bv dz\, N(z) = \frac{4}{c\Gamma}\int_{0}^\bv  dz\, g^2(z) \, ,
\end{align}
\noindent where $\sigma_0 = \xfrac{2kd_{eg}^2}{\epsilon_0\hbar\Gamma}$ is the resonant atomic cross section. Note that since $g(z)$ depends on the density of the atoms, the atomic excitations in this work refer to collective excitations---i.e., Dicke-like states in which the excitation is shared amongst a large number of atoms \cite{Dicke_1954}.

The transmission probability $P_T$ is given by 
\begin{align}
    P_T &= n\ini(+\infty) = \ip{\sn{\psi}\ini(+\infty)}{\sn{\psi}\ini(+\infty)} \notag\\
    & = \intall dz |\alpha\ini(z,+\infty)|^2  \, ,
\end{align}
since in the limit of $t\rightarrow\infty$ no atomic excitations remain. Equivalently, if we define a normalized final state (as a bra for convenience), 
\begin{equation} \label{fincon}
    \bra{\psi\fin(+\infty)} = \frac{1}{\sqrt{P_T}} \intall dz\, \alpha^*\ini(z,+\infty)\,{}_z\!\bra{p} \, ,
 \end{equation}
then we have 
\begin{align}
    P_T &=   \qty[\ip{\psi\fin(+\infty)}{\sn{\psi}\ini(+\infty)}]^2 \notag\\
    &=  \qty[\bra{\psi\fin(+\infty)}\cdots \hat{M}(dt)\hat{M}(dt)\cdots\ket{\psi\ini(-\infty)}]^2 \, .
\end{align}
But this last expression shows that we can also write a symmetric-in-time expression, for arbitrary intermediate $t$,
\begin{equation} \label{eqn:post_selection_probability}
    P_T =   \qty[\ip{\sn{\psi}\fin(t)}{\sn{\psi}\ini(t)}]^2 \, ,
\end{equation}
where 
\begin{equation}\label{eqn:quantum_state_backwards}
    \bra{\sn{\psi}\fin(t)} = \intall dz\qty[ \elfb \, {}_z\!\bra{p} + \betb \,{}_z\!\bra{e} ]
\end{equation}
\noindent is a subnormalized state which evolves backwards from \erf{fincon} at $t=+\infty$ via
\begin{align}
    \bra{\sn{\psi}\fin(t-dt)}&= \bra{\sn{\psi}\fin(t)}\hat{M}(dt)\notag\\ 
    &= \bra{\sn{\psi}\fin(t)}\qty[1-i\hat{H}_{\rm NH}dt/\hbar]  \, .
\end{align}
Rewriting this as a forwards-in-time evolution for a ket, we have
\begin{align}
\ket{\sn{\psi}\fin(t+dt)}&=\hat{M}^\dagger(-dt)\ket{\sn{\psi}\fin(t)} \notag\\
    &= \qty[1-i\hat{H}_{\rm NH}^\dagger dt/\hbar]\ket{\sn{\psi}\fin(t)} \, .
\end{align}
Thus, the Hermitian part of the Hamiltonian remains unchanged but the anti-Hermitian part has its sign reversed. This yields the equations of motion for the backwards-evolving state,
\begin{subequations}\label{eqn:howard_backward}
\begin{gather}
\qty[c\pdv{}{z}+\pdv{}{t}]\elfb = ig(z)\betb  \, , \label{eqn:howard_backward_alpha}\\
\pdv{}{t}\betb = ig(z)\elfb +\frac{\Gamma}{2}\betb \, .\label{eqn:howard_backward_beta}
\end{gather}    
\end{subequations}
\noindent Note that the only difference between~\erfa{eqn:howard_forward}{eqn:howard_backward} is the sign of the decay term for $\betb$.

The norm of this state, $n\fin(t)$ (defined as in~\erf{eqn:norm} but with `$\leftarrow$' subscripts), increases monotonically forwards in time. It thus decreases monotonically backwards in time, which is the direction we use for solving this equation, with the final conditions in the limit $t\to+\infty$
\begin{subequations}\label{eqn:finalcond}
\begin{gather}
\elfb = \frac{1}{\sqrt{P_T}} \alpha\ini(z,t) \, , \label{eqn:finalcond_alpha}\\
\betb = 0 \, .\label{eqn:finalcond_beta}
\end{gather}    
\end{subequations}

Consider now the properties of the system at time $t$ as revealed by a weak probe, post-selected on the photon traversing the medium without scattering. 
As has been understood for some decades, such post-selected properties are not determined by a single quantum state, but rather by a pair of quantum states, one arising from the initial preparation and the other from the final measurement \cite{carmichael_open_1993,wiseman_quantum_2009,tsang_time-symmetric_2009,Dressel_review_WV}. 
From the above theory, these two states, at time $t$, are exactly the states $\ket{\sn{\psi}\ini(t)}$ and $\bra{\sn{\psi}\fin(t)}$. 
Specifically, if a probe weakly measures an observable $\hat A$ in a minimally disturbing way, then, from many repetitions of the experiment, and post-selecting on the photon being transmitted (hence the $T$ subscript), the average value of the probe readout, $A_T(t)$, will be found to be the real part of the weak value, 
\begin{equation}\label{eqn:wv}
A_T(t) = \text{Re}\frac{\bra{\sn{\psi}\fin(t)}\hat{A}\ket{\sn{\psi}\ini(t)}}{\ip{\sn{\psi}\fin(t)}{\sn{\psi}\ini(t)}} \, .
\end{equation}
Note that weak values in general can take on values outside of the spectrum of the operator being measured, and can even be complex \cite{aharonov_how_1988,ritchie_realization_1991, Josza_complexWV,ramos_measurement_2020}. The imaginary part of a weak value is generally associated with the backaction of the measurement on the system, while the real part describes the result of the probe measurement \cite{Dressel_imag2012}. Also, note that the subnormality of the states is irrelevant to this value. 

In an experiment like the one described earlier, the measurement is effected by an off-resonant probe beam which picks up a weak, excitation-dependent cross-phase shift due to dispersive coupling (for example, via a saturation-based Kerr nonlinearity), as depicted in \frf{fig:setup}. 
The nonlinear cross-phase shift acquired by the probe as a result of the single-photon pulse traversing the medium is proportional to the total amount of atomic excitation in the medium, which we describe with the operator
\begin{equation}\label{eqn:phase_operator}
    \hat{\phi}_{\rm probe} := \int_{0}^\bv  dz\, \epsilon(z) \ket{e}_z\!\bra{e} \, .
\end{equation}
Here $\epsilon(z)$ determines the strength of the dispersive coupling (in units of radians per unit length). 
Substituting $\hat{\phi}_{\rm probe}$ for $\hat A$ in \erf{eqn:wv} and using the definitions of the forwards and backwards states in~\erfa{eqn:quantum_state}{eqn:quantum_state_backwards}, the average cross-phase shift, post-selected on the photon being transmitted, $\phi_{{\rm probe}|T}(t)$, is given by 
\begin{align}\label{eqn:phiT}
    &\phi_{{\rm probe}|T}(t) = \text{Re}\frac{\bra{\sn{\psi}\fin(t)}\hat{\phi}_{\rm probe}\ket{\sn{\psi}\ini(t)}}{\ip{\sn{\psi}\fin(t)}{\sn{\psi}\ini(t)}} \nonumber\\
    &=\frac{1}{\sqrt{P_T}}\text{Re}\int_{0}^\bv dz \beta\fin^*(z,t)\beta\ini(z,t)\epsilon(z)\,,
\end{align}
where we have used~\erf{eqn:post_selection_probability} to evaluate the denominator. 
We then define the average \textit{atomic excitation time} experienced by a transmitted photon to be the time integral of this phase shift, normalized by the dispersive coupling strength $\epsilon$ (which we now assume to be constant over the whole atomic medium),
\begin{align}
\ett &:= \frac{1}{\epsilon}\intall  dt\, \phi_{{\rm probe}|T}(t) \nn \\
&= \text{Re}\frac{1}{\sqrt{P_T}}\intall   dt\int_{0}^\bv dz\beta^*\fin(z,t)\beta\ini(z,t) \, .
\label{eqn:tauT_weak_value}
\end{align}

If, by contrast, no post-selection is performed, the atomic excitation time experienced by the average photon, $\etav$, will be the time integral of the expectation value of the operator in~\erf{eqn:phase_operator}, normalized by the dispersive coupling strength $\epsilon$:
\begin{align}
\etav &= \frac{1}{\epsilon}\intall dt\expval{\hat{\phi}_{\rm probe}}{\sn{\psi}\ini(t)} \nn \\
& = \intall dt\int_{0}^\bv dz\, \abs{\beta\ini(z,t)}^2 \, .\label{eqn:tau0}
\end{align}
Note that here the subnormality of the state plays a crucial role---the integrals $\int_{0}^\bv dz\, \abs{\beta\ini(z,t)}^2$ and $\intall dz\, \abs{\alpha\ini(z,t)}^2$ are equal to the probability, at time $t$, of there being an atomic excitation, and a propagating photon, respectively, while the ``missing'' probability, $1-\ip{\sn{\psi}\ini(t)}{\sn{\psi}\ini(t)}$, is exactly the probability that the photon has already been scattered.

\section{Results}\label{sec:group_delay_and_excitation_time}
In this section we present the solutions to the above equations of motion (\srf{subsec:solutions}), as well as the predictions of the theory for an average incident photon (\srf{subsec:average_excitation_time}), a transmitted photon (\srf{subsec:transmitted_photons}) and a scattered photon (\srf{subsec:scattered_photons}). In~\srf{subsec:comparison_excitation_times} we compare and contrast these times in various limiting regimes.

\subsection{Solutions to the equations of motion}\label{subsec:solutions}
We begin by presenting the solutions to the equations of motion, ~\erfa{eqn:howard_forward}{eqn:howard_backward}, for a single-photon pulse incident on a 1D medium whose shape is determined by the spatially-dependent coupling constant $g(z)$, where $g(z):= 0$ outside of $[0,\bv]$. 

By transforming to the frequency domain, one can easily solve for $\tilde{\alpha}_{\ini(\fin)}(z,\omega)$ and $\tilde{\beta}_{\ini(\fin)}(z,\omega)$, where the tildes indicate a Fourier transform. 
Recall that~\erfa{eqn:howard_forward}{eqn:howard_backward} are defined in a frame rotating at the atomic transition frequency, so $\omega=0$ corresponds to the atomic resonance.
For the forward-propagating solutions, it is shown in~\arf{sec:app:solution_to_EOMs} that one obtains 
\begin{subequations}\label{eqn:solutions_forward_general}
\begin{gather} \label{eqn:alpha_solution_forward_general}
\elfwf =\elfwbc e^{-i\phi(z,\omega)-\xfrac{\od(z,\omega)}{2} - iz\xfrac{\omega}{c}} \, , \\
 \tilde{\beta}\ini(z,\omega) = \frac{ig(z)}{i\omega +\Gamma/2}\tilde{\alpha}\ini(z,\omega)\, ,\label{eqn:beta_solution_forward_general}
\end{gather}    
\end{subequations}

\noindent where $\elfwbc$ is the Fourier transform of $\alpha_{\rm in}(t):=\alpha_\rightarrow(0,t)$, $\od(z,\omega)$ is the optical depth experienced by the frequency component $\omega$ in the region $[0,z]$, $\phi(z,\omega)$ is the phase acquired by the frequency component $\omega$ of the pulse due to the atoms in the same region and the $\exp[-iz\omega/c]$ phase term simply represents free-space propagation from $0$ to $z$.
The phase and optical depth are related---as expected from the Kramers-Kronig relations \cite{Toll_KK_relation,Hu_KK_relation}---and are given by
\begin{equation}\label{eqn:phase_OD_definition}
    \phi(z,\omega) = -\frac{\omega}{\Gamma}\od(z,\omega) = -\frac{\omega}{\Gamma}\frac{4\mathcal{L}(\omega)}{c\Gamma}\int_0^z  dz^\prime g^2(z^\prime) \, ,
\end{equation}
\noindent where 
\begin{equation}\label{eqn:defLorentzian}
\mathcal{L}(\omega):= \qty[1+(2\omega/\Gamma)^2]^{-1}
\end{equation}
is a Lorentzian function with a full width at half maximum (FWHM) of $\Gamma$.

For the backward-propagating solutions post-selected on the photon being transmitted, it is shown in~\arf{sec:app:solution_to_EOMs} that one obtains
\begin{subequations}\label{eqn:backwards_solutions_general}
    \begin{gather}
    \elfwb = \frac{\elfwf}{\sqrt{P_T}} 
    e^{ \od(z,\omega) - \od(\bv,\omega)} \, ,
    \\
    \tilde{\beta}\fin(z,\omega) = \frac{ig(z)}{i\omega -\Gamma/2}\tilde{\alpha}\fin(z,\omega)\, .
\end{gather}
\end{subequations}

\subsection{Average atomic excitation time}\label{subsec:average_excitation_time}
In this section we show that the atomic excitation time $\etav$ experienced by the average incident photon is equal to the product of the spontaneous lifetime of the atoms and the probability of scattering, $\etav=P_S/\Gamma$. We begin by replacing the time integral in \erf{eqn:tau0} with an integral over frequency to express $\etav$ as
\begin{equation}\label{eqn:tau0_frequency}
    \etav = \intall d\omega\int_0^\bv  dz \abs{\tilde{\beta}\ini(z,\omega)}^2\, .
\end{equation}
Using the solutions from \erf{eqn:solutions_forward_general}, this can be written as
\begin{equation}
    \etav = \frac{c}{\Gamma}\intall d\omega \abs{\elfwbc}^2 \int_0^\bv  dz \pdv{\od(z,\omega)}{z}e^{-\od(z,\omega)} \,,
\end{equation}
\noindent where $\xfrac{\partial\od(z,\omega)}{\partial z}=\xfrac{4g^2(z)\mathcal{L}(\omega)}{c\Gamma}$. Evaluating the spatial integral leads to
\begin{equation}\label{eqn:tau0_final}
    \etav = \frac{1}{\Gamma}\intall d\omega \abs{\elfwbc}^2c\qty[1-e^{-\od_0\mathcal{L}(\omega)}]\, ,
\end{equation}
\noindent where $\od_0\mathcal{L}(\omega)=\od(L,\omega)$ is the optical depth of the entire medium for the frequency component $\omega$.
We now note that the transmission probability $P_T$ can be written as
\begin{align}\label{eqn:P_T_frequency_space}
    P_T& = \intall dz \abs{\alpha\ini(z,+\infty)}^2 = c\intall dt\abs{\alpha\ini(L,t)}^2\nn\\
    &= c\intall d\omega \abs{\elfwbc}^2e^{-\od_0\mathcal{L}(\omega)} = 1-P_S\, ,
\end{align}
\noindent where the second equality is a result of the fact that the pulse is strictly a function of $z-ct$ outside of the medium and we have used the fact that $\abs{\tilde{\alpha}\ini(L,\omega)}^2=\abs{\elfwbc}^2\exp[-\od_0\mathcal{L}(\omega)]$. Given that $c\intall d\omega\abs{\elfwbc}^2=1$ (since the input pulse is normalized), it is then clear that the integral in~\erf{eqn:tau0_final} is equal to the scattering probability $P_S$, and therefore that we have $\etav=P_S/\Gamma$.

\subsection{Transmitted photons}\label{subsec:transmitted_photons}
Here we calculate the atomic excitation time experienced by transmitted photons, $\ett$. We begin by noting that by Parceval's Theorem, 
\begin{align}
    &\intall dt\, \beta^*\fin(z,t)\beta\ini(z,t) \nn 
    \\ & = \intall d\omega\,  \tilde{\beta}^*\fin(z,\omega)\tilde{\beta}\ini(z,\omega) \, .
\end{align}
Using this to replace the time integral in the definition of $\ett$ given in \erf{eqn:tauT_weak_value} with a frequency integral, and substituting in the solutions for $\betwf$ and $\betwb$ given in~\erfa{eqn:solutions_forward_general}{eqn:backwards_solutions_general}, we can express $\ett$ as 
\begin{align}
&\ett =  \text{Re}\frac{1}{\sqrt{P_T}}\int_{0}^{\bv} dz\int_{-\infty}^\infty   d\omega\tilde{\beta}^*\fin(z,\omega)\tilde{\beta}\ini(z,\omega) \nn\\
&= \text{Re}\frac{c\Gamma\od_0}{4P_T}\intall d\omega \frac{\abs{\elfwbc}^2e^{-\od(L,\omega)}}{\omega^2-\Gamma^2/4-i\omega\Gamma} \nn\\
&= \frac{c}{P_T}\intall d\omega \abs{\elfwbc}^2e^{-\od_0\mathcal{L}(\omega)}\frac{\od_0}{\Gamma}\frac{\qty(\frac{2\omega}{\Gamma})^2-1}{\qty[1+\qty(\frac{2\omega}{\Gamma})^2]^2} \, .\label{eqn:tauT_solution}
\end{align}

This time is plotted in \frf{fig:tauT_vs_OD} for resonant Gaussian input pulses of various durations as a function of the resonant optical depth $\od_0$. 
It can be seen that $\ett$ begins at zero for all pulse lengths and immediately takes on negative values, much to our surprise upon performing these calculations. 
The time $\ett$ becomes positive for large values of $\od_0$, and the optical depth at which this occurs increases with increasing pulse duration.
For pulses much longer than the natural lifetime of the excited state (i.e., narrow-band pulses), $\ett$ tends towards $-\od_0$/$\Gamma$. 

\begin{figure}[h]
\includegraphics[width=1\linewidth]{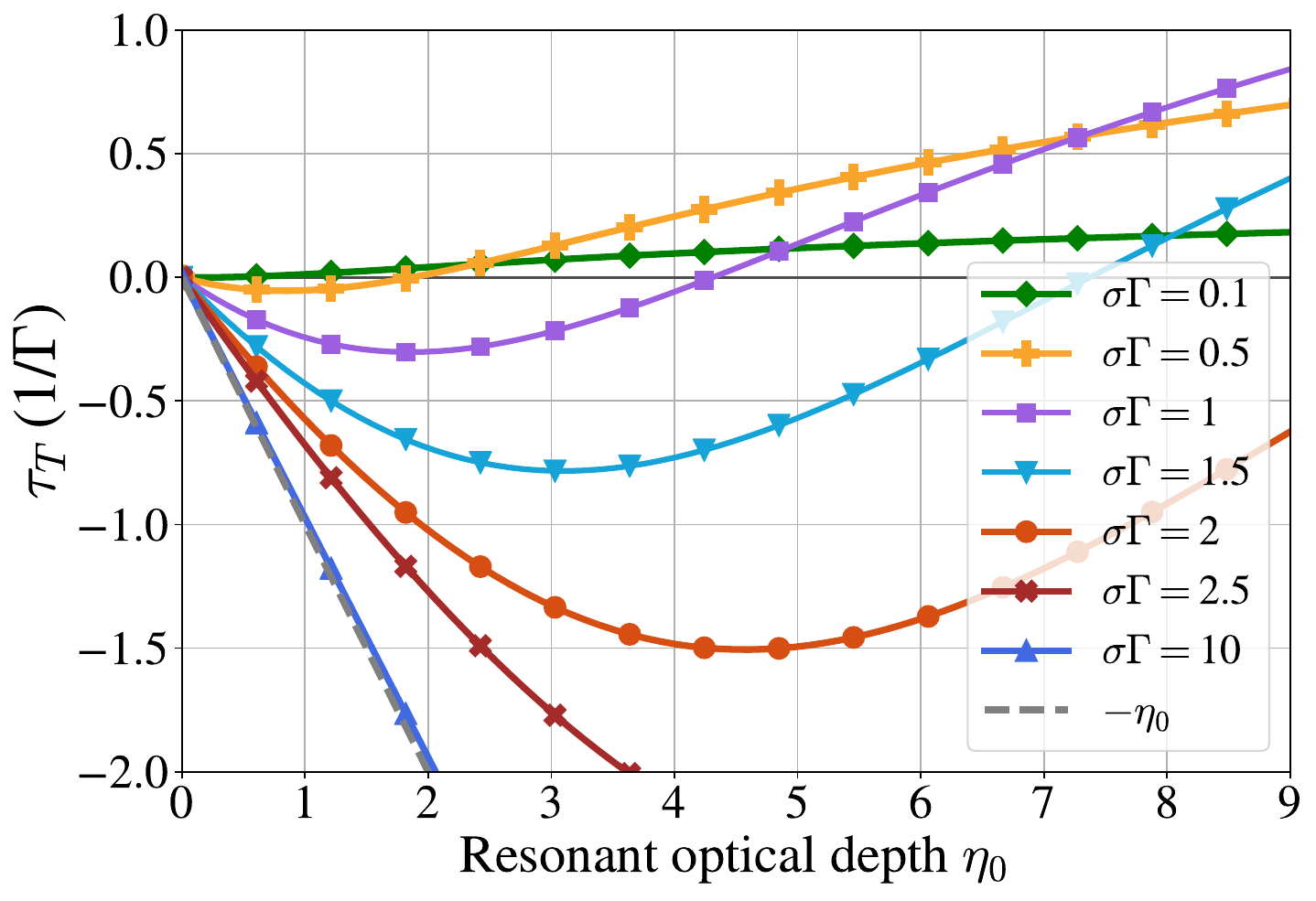}
\caption{Atomic excitation times experienced by transmitted resonant photons ($\ett$) for Gaussian incident photon pulses of different RMS intensity durations $\sigma$ (given in the legend in units of $1/\Gamma$) vs resonant optical depth $\od_0$. 
For pulses that are sufficiently narrow-band compared to the atomic lifetime, $\ett$ tends to -$\od_0/\Gamma$, whereas the time for broad-band pulses becomes positive for large $\od_0$.}
\label{fig:tauT_vs_OD}
\end{figure}

Upon closer inspection of the last line of~\erf{eqn:tauT_solution}, it can be seen that the final two terms in the integrand are equal to the frequency derivative of the spectral phase acquired by the light in the medium, $\xfrac{d\phi(L,\omega)}{d\omega}$. 
This quantity is in fact equal to the narrow-band group delay for the frequency component $\omega$. 
In general, for a narrow-band pulse detuned by $\Delta$ from the atomic resonance traversing a medium with resonant optical depth $\od_0$, the group delay of the pulse compared to free-space propagation is given by \cite{born_wolf}
\begin{equation}\label{eqn:group_delay}
\gd(\Delta,\od_0) := \eval{\dv{\phi(L,\omega)}{\omega}}_{\Delta}  = -\frac{\od_0}{\Gamma}\frac{1-\qty(\frac{2\Delta}{\Gamma})^2}{\qty[1+\qty(\frac{2\Delta}{\Gamma})^2]^2} \, ,
\end{equation}
\noindent where we emphasize the dependence on $\od_0$ for reasons that will be clear in~\srf{subsec:scattered_photons}.

The group delay $\gd(\Delta,\od_0)$ is plotted vs $\Delta$ in \frf{fig:tauT_tauS_vs_detuning}(a) for different values of $\od_0$. Note that close to resonance the group delay can be negative, which can be understood as the result of the preferential transmission of the leading half of the pulse due to the finite response time of the atomic polarizability~\cite{crisp_concept_1971,garrett_propagation_1970}. On resonance, $\gd(0,
\od_0)=-\od_0/\Gamma$.

\begin{figure}[h]
\includegraphics[width=1\linewidth]{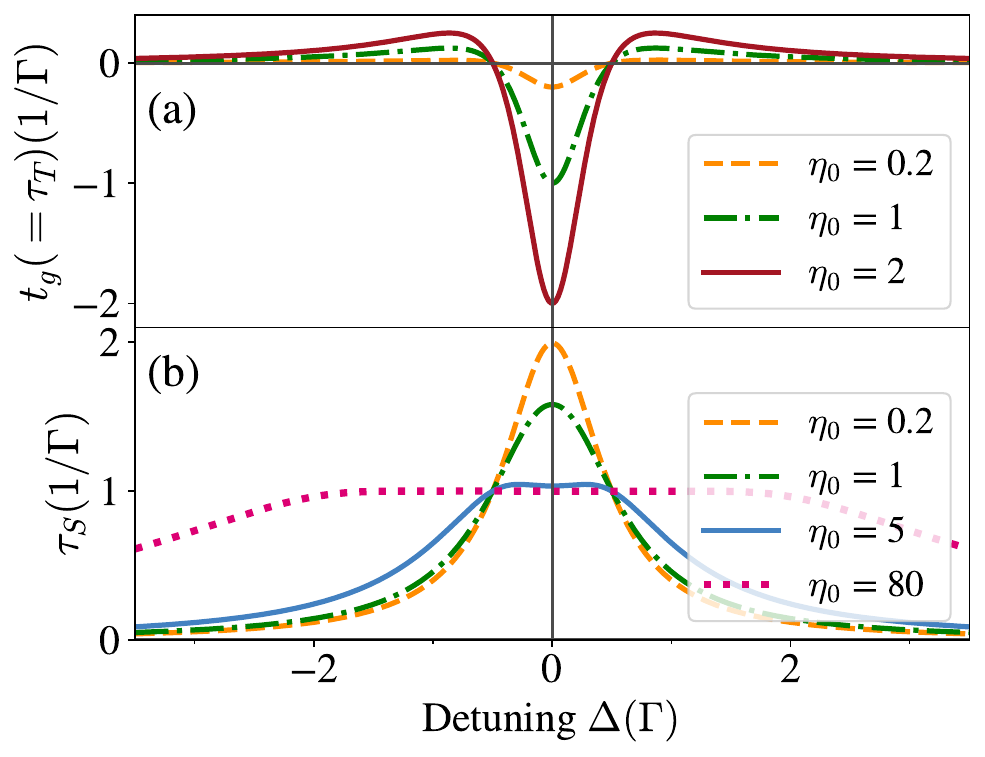}
\caption{(a) Group delay $\gd(\Delta,\od_0)$ $\qty(=\ett(\Delta,\od_0))$ and (b) atomic excitation time experienced by scattered photons $\ets(\Delta,\od_0)$ vs detuning in the narrow-band limit for various values of $\od_0$. 
On resonance, the group delay is given by $-\od_0/\Gamma$. 
For broad-band pulses, the group delay is averaged over the spectrum of the transmitted pulse, leading to positive delays (atomic excitation times) for sufficiently large $\od_0$ due to the preferential transmission of off-resonant frequency components (cf.  \frf{fig:tauT_vs_OD}). 
For $\od_0\rightarrow 0$, $\ets(\Delta,\od_0)$ approaches the Wigner time delay $\wt(\Delta)$. 
For $\od_0\gg1$, the probability of near-resonant photons being scattered approaches unity, causing $\ets(\Delta,\od_0)$ to saturate at $1/\Gamma$ (cf. \erf{eqn:loss_excitation_time}).}
\label{fig:tauT_tauS_vs_detuning}
\end{figure}

We now note that if one uses the third expression for $P_T$ from~\erf{eqn:P_T_frequency_space} in~\erf{eqn:tauT_solution}, it is clear that the expression for $\ett$ given in~\erf{eqn:tauT_solution} is a weighted average of the narrow-band group delay $\gd(\omega,\od_0)$, taken over the frequency spectrum of the transmitted pulse, $c\abs{\elfwbc}^2\exp[-\od_0\mathcal{L}(\omega)]$. 
For incident pulses that are symmetric in time, this weighted average---which we will refer to as the \textit{net group delay} of the transmitted pulse---is in fact equal to the time delay of the `center of mass' of the transmitted pulse due to having traversed the medium (see~\arf{sec:supplementary_broadband_group_delay} for a proof, as well as Refs. \cite{peatross_average_2000, ware_role_2001} for a discussion of how the center-of-mass delay generally consists of a net group delay and a `pulse-reshaping' delay which becomes important for pulses that do not have real Fourier transforms, such as chirped pulses). 
We have therefore shown that the atomic excitation time experienced by a transmitted photon is equal to the net group delay---and hence, for most commonly used pulses, the center-of-mass time delay---experienced by the photon: 
\begin{equation}\label{eqn:tauT_broadband_pulses}
\ett = \frac{c}{P_T}\intall d\omega \abs{\elfwbc}^2 e^{-\od_0\mathcal{L}(\omega)}\gd(\omega,\od_0)\, .
\end{equation}

\subsection{Scattered photons}\label{subsec:scattered_photons}
In addition to calculating the atomic excitation time experienced by transmitted photons, the formalism developed in \srf{sec:theory} can be used to calculate the atomic excitation time experienced by the average {\em scattered} photon. 
An expression for this time is derived in~\arf{sec:app_scattered_photons}, but turns out to be rather cumbersome to work with.
In order to find a simpler expression for $\ets$, we note that since each incident photon is either transmitted through the medium or scattered, we can express the average atomic excitation time $\etav$ as a weighted average of the atomic excitation times in these two cases:
\begin{equation}\label{eqn:sum_rule}
\etav = P_S/\Gamma = P_S\ets + P_T\ett \, .
\end{equation}
\noindent We can therefore express $\ets$ as $\ets=1/\Gamma-P_T\ett/P_S$. We have confirmed numerically that the formula for $\ett$ given in \erf{eqn:tauT_broadband_pulses} and the formula for $\ets$ given in~\erf{eqn:app:tauS_full_expression} of~\arf{sec:app_scattered_photons} obey this weighted average. 
Note that when $\ett>0$, as in the experiment of Sinclair \ea~\cite{sinclair_measuring_2022}, this sum rule implies $\ets < 1/\Gamma$, a fact that the semiclassical picture mentioned in~\srf{sec:intro} attributes to the reduced lifetime of the excited state due to coherent forward emission \cite{Sinclair2021_thesis}.
 
Beginning with the case of an infinitely narrow-band pulse with detuning $\Delta$ from the atomic transition, \erf{eqn:sum_rule} gives
\begin{equation}\label{eqn:loss_excitation_time}
\ets(\Delta,\od_0)= \frac{1}{\Gamma}-\frac{e^{-\od_0\mathcal{L}(\Delta)}}{1-e^{-\od_0\mathcal{L}(\Delta)}}\ett(\Delta,\od_0) \, ,
\end{equation}
\noindent where $\ett(\Delta,\od_0)=\gd(\Delta,\od_0)$ is given by \erf{eqn:group_delay}. 
This expression is plotted as a function of detuning for several values of $\od_0$ in \frf{fig:tauT_tauS_vs_detuning}(b). At low optical depths, $\ets(\Delta,\od_0)$ is simply a Lorenztian function with a FWHM of $\Gamma$ and a peak value of $2/\Gamma$. 
For $\od_0 \gg 1$, frequency components near resonance are very likely to be scattered ($P_S\rightarrow 1$), so \erf{eqn:sum_rule} implies that $\ets(\Delta,\od_0)$ will approach $1/\Gamma$ for near-resonant frequencies, as seen in \frf{fig:tauT_tauS_vs_detuning}(b) for the two highest optical depths.

For a broad-band pulse, $\ets$ can be calculated by finding $\ett$ for the pulse and using \erf{eqn:sum_rule}, together with the expressions for $P_T$ and $P_S$ given in~\erf{eqn:P_T_frequency_space}. 
Equivalently, $\ets$ can be found by averaging $\ets(\omega,\od_0)$ from~\erf{eqn:loss_excitation_time} over the frequency spectrum of the scattered light, as shown in~\arf{sec:scattering_delay_excitation}. 
The atomic excitation time $\ets$ experienced by broad-band photons will be discussed further in~\srf{subsec:comparison_excitation_times}. 

The atomic excitation time experienced by scattered photons turns out to admit a very intuitive interpretation. 
To see this, recall that in~\srf{subsec:transmitted_photons} it was shown that the atomic excitation time experienced by transmitted photons is equal to the net group delay of the transmitted pulse, which, for the case of a symmetric input pulse, is simply equal to the time delay of the center of mass of the pulse.
Might the same principle---that the atomic excitation time is equal to the delay time of the pulse---apply to scattered photons? 
In the case of a scattered photon, there are two time delays that need to be considered: the time delay acquired by propagating through the medium to the point at which the photon is scattered (i.e., the net group delay of the scattered pulse) and the time delay associated with the elastic scattering process itself. 
This scattering time delay, often referred to as the Wigner time delay~\cite{wigner_lower_1955,bourgain_direct_2013}, is equal to the derivative of the scattering phase shift with respect to the frequency of the light.
For a narrow-band pulse with detuning $\Delta$ scattering from a two-level atom, this time delay is given by
\begin{equation}\label{eqn:wigner_time}
\wt(\Delta)  = \frac{2}{\Gamma}\frac{1}{1+(2\Delta/\Gamma)^2} \, .
\end{equation}

For a given resonant optical depth $\od_0$, the time delay of the average scattered photon pulse $\sdt$ can then be calculated by adding the Wigner time delay to the average group delay acquired by scattered photons. 
This is found by averaging the group delay over the optical depths at which a photon can be scattered:
\begin{equation}\label{eqn:scattered_delay_time}
\sdt(\Delta,\od_0) = \wt(\Delta) + \frac{\int_0^{\od_0} d\od_0^\prime e^{-\od_0^\prime\mathcal{L}(\Delta)}\gd(\Delta, \od_0^\prime)}{\int_0^{\od_0} d\od_0^\prime e^{-\od_0^\prime\mathcal{L}(\Delta)}} \, ,
\end{equation}
\noindent where $\exp[-\od_0^\prime\mathcal{L}(\Delta)]$ is the probability of the photon propagating to an optical depth of $\od_0^\prime$ without being scattered. 
It is shown in~\arf{sec:scattered_delay_time_derivation} that for input pulses which are symmetric in time, the expression in~\erf{eqn:scattered_delay_time} (or the corresponding weighted average of this expression in the case of a broad-band pulse) is equal to the time delay of the center of mass of the scattered photon pulse.
Inserting the expression for $\gd(\Delta,\od_0^\prime)$ given in \erf{eqn:group_delay} into \erf{eqn:scattered_delay_time} and evaluating the integrals (see~\arf{sec:scattering_delay_excitation} for details), one finds that the average time delay of a narrow-band scattered photon pulse reduces to the expression for the atomic excitation time experienced by scattered photons given in \erf{eqn:loss_excitation_time}.
That is, $\sdt(\Delta,\od_0) = \ets(\Delta,\od_0)$. Since this equality holds for every frequency, one can take the weighted average of both quantities over the spectrum of the scattered light to show that this holds for pulses of arbitrary bandwidth, as shown in~\arf{sec:scattering_delay_excitation}. 
Therefore, for the case of a time-symmetric input pulse, we have shown that the atomic excitation time experienced by a scattered photon is equal to the time delay of the scattered photon pulse: $\sdt=\ets$.

\subsection{Comparison of atomic excitation times}\label{subsec:comparison_excitation_times}
We now compare and contrast the behaviours of the atomic excitation times $\etav$, $\ett$ and $\ets$ for resonant Gaussian pulses of three different RMS intensity durations $\sigma$, corresponding to narrow-band ($\sigma^{-1} \ll \Gamma$), medium-band ($\sigma^{-1} \approx \Gamma$) and broad-band ($\sigma^{-1} \gg \Gamma$) pulses. 
To facilitate this comparison, all times are plotted in~\frf{fig:sum_rule_effOD} as a function of the effective optical depth $\od_{\rm eff}:= -\ln(P_T)$, since on this axis the average atomic excitation time $\etav=P_S/\Gamma=(1-\exp[-\od_{\rm eff}])/\Gamma$ is independent of the pulse duration. 
Note that for each pulse duration, the three atomic excitation times satisfy the probability sum rule, $\etav=P_S\ets+P_T\ett$. We discuss the behaviour of $\ets$ and $\ett$ in turn below.

Beginning with the atomic excitation time experienced by scattered photons, for an infinitely narrow-band pulse ($\sigma\Gamma=\infty$), $\ets$ begins at the Wigner time delay of $2/\Gamma$.
This can be seen from \erf{eqn:scattered_delay_time}, recalling that $\gd(\Delta,\od_0)$ is proportional to $\od_0$, and therefore that the integral term goes to zero in the limit of $\od_0\rightarrow 0$. 
As $\od_{\rm eff}$ increases, $\ets$  gradually decays to $1/\Gamma$, which is evident from \erf{eqn:loss_excitation_time}. 
For the medium-band case ($\sigma\Gamma\approx 1$), $\ets$ begins at a value between $1/\Gamma$ and $2/\Gamma$ (about 1.5$/\Gamma$ for $\sigma\Gamma=1$) and decays towards $1/\Gamma$ as $\od_{\rm eff} $ increases. 
However, unlike the narrow-band case, here $\ets$ dips below $1/\Gamma$ as soon as $\ett >0$, and subsequently approaches $1/\Gamma$ from below. 
In the broad-band case ($\sigma\Gamma\ll1$), $\ets$ begins at $\sim1/\Gamma$, dips rapidly down to $\sim0.5/\Gamma$ until the near-resonant components of the pulse have been sufficiently attenuated, and subsequently climbs back to $1/\Gamma$ for large $\od_{\rm eff}$.
Approximate expressions for $\ets$ in the limits of $\od_0\ll1$ and $\od_0\gg1$ for broad-band pulses are derived in~\arf{sec:tauS_approximations_BB}, together with a plot of $\ets$ and said approximate expressions for the case of $\od_{\rm eff}\ll1$.

 Turning now to the atomic excitation time experienced by transmitted photons, in the narrow-band case, $\ett$ behaves as $\ett = -\od_{\rm eff}/\Gamma = -\od_0/\Gamma$, as noted in \srf{subsec:transmitted_photons}. 
 In the medium-band case,  $\ett$ starts off negative and becomes positive when $\od_{\rm eff}$ is of order 1 (actually about $2$ for $\sigma\Gamma=1$). 
 This is the point where the near-resonant components of the pulse---which experience a negative group delay---have been sufficiently attenuated. 
 For broad-band pulses ($\sigma\Gamma \ll 1$), $\ett$ grows quadratically with $\od_{\rm eff}$ for $\od_{\rm eff}\ll1$ (not visible in plot), and subsequently increases linearly with $\od_{\rm eff}$ as $\ett \approx 0.5\,\od_{\rm eff}/\Gamma$ after the near-resonant components of the pulse have been sufficiently attenuated. Derivations of these scalings, together with a plot of $\ett$ and these approximate forms in the regime of $\od_{\rm eff}\ll1$, can be found in~\arf{sec:tauT_approximations_BB}.
 
It is therefore clear that the way in which the average atomic excitation time is split between transmitted and scattered photons, as per \erf{eqn:sum_rule}, is far less simple than one would expect from the seemingly intuitive form of the equation $\etav=P_S/\Gamma$.

\begin{figure}[h!]
\includegraphics[width=1\linewidth]{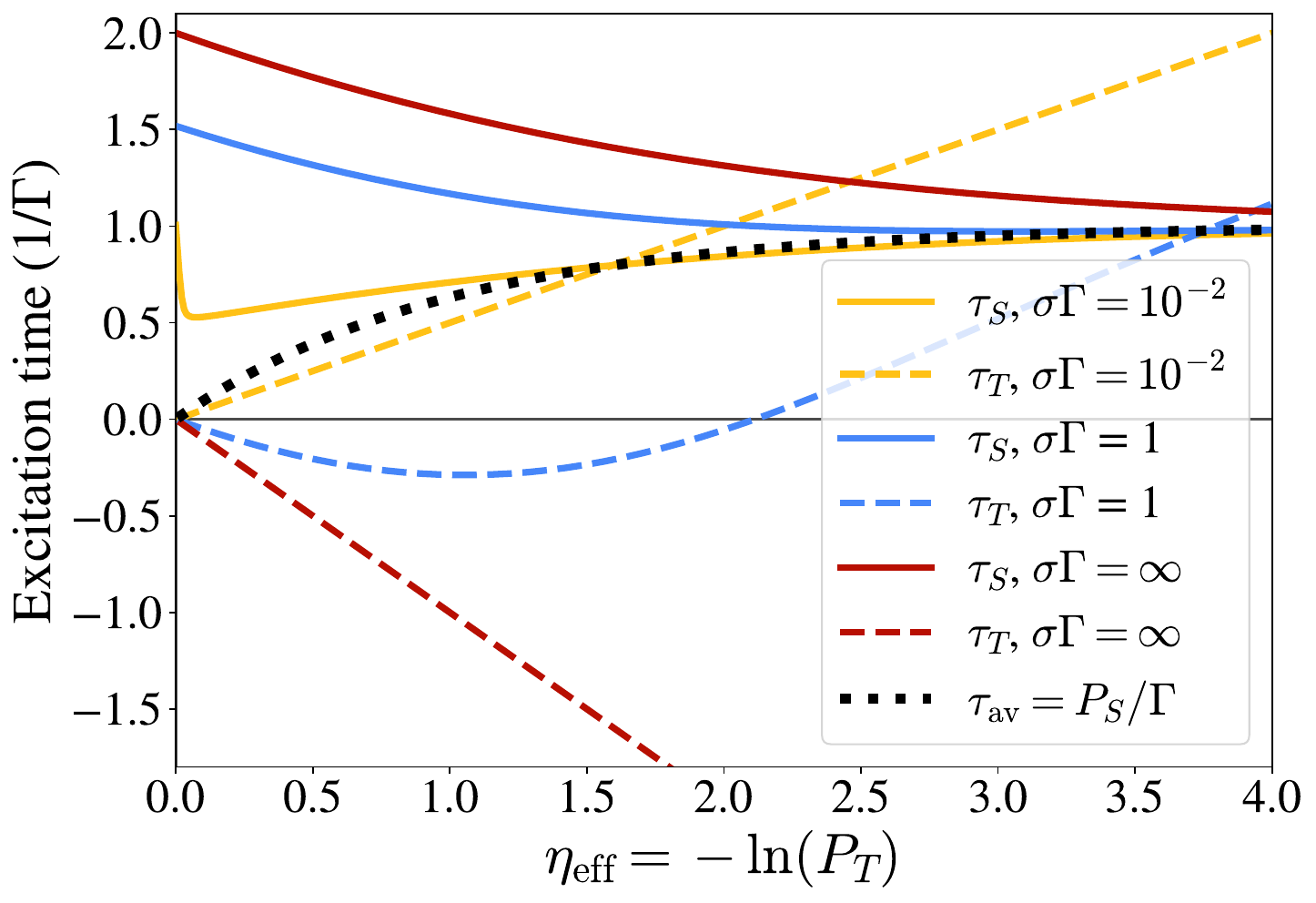}
\caption{Atomic excitation times experienced by transmitted ($\ett$, dashed curves) and scattered ($\ets$, solid curves) photons for resonant Gaussian pulses of three different RMS intensity durations $\sigma$ (indicated by color) vs effective optical depth $\od_{\rm eff}=-\ln (P_T)$. 
The curves for $\ett$ and $\ets$ in the narrow-band limit ($\sigma\Gamma=\infty$) are given by \erfa{eqn:group_delay}{eqn:loss_excitation_time}, respectively. 
For each pulse duration and effective optical depth, $\ett$ and $\ets$ satisfy $\etav = P_S/\Gamma = P_S\ets + P_T\ett$.}
\label{fig:sum_rule_effOD}
\end{figure}

\section{Discussion}\label{sec:discussion}
We have shown that the atomic excitation time experienced by a time-symmetric single-photon pulse propagating through a cloud of 2-level atoms, as measured by a phase shift on a weak probe, is equal to the time delay of the center of mass of the photon pulse, regardless of whether it is transmitted through the cloud or scattered into a side mode.
When the photon is transmitted, this time delay is equal to the net group delay experienced by the transmitted pulse. 
Given that the group delay for narrow-band, resonant light propagating through a sample of two-level atoms is negative, this result poses an interesting interpretational question: what does it mean for a photon to spend a negative amount of time as an atomic excitation? 

In the case of the group delay, negative times can be understood as the result of the incident pulse being reshaped by the atoms. 
In particular, the trailing half of the incident pulse experiences more absorption than the leading half due to the finite response time of the atomic polarizability~\cite{crisp_concept_1971,garrett_propagation_1970}. 
As a result, the center of mass of the transmitted pulse is shifted towards the leading half (i.e., to earlier times), giving the appearance of `superluminal' propagation, although the information velocity is bounded by $c$~\cite{brillouin_wave_1960,stenner_speed_2003}. 
No such interpretation is available for the atomic excitation time $\ett$ experienced by transmitted photons, even though it is equal to the net group delay. 
The atomic excitation time $\ett$ is derived from a post-selected measurement of the probe cross-phase-shift operator $\hat{\phi}_{\rm probe}$ defined in \erf{eqn:phase_operator}, which has an eigenvalue spectrum that is strictly non-negative. 
It is therefore clear that a negative atomic excitation time constitutes an \textit{anomalous weak value} \cite{pusey_anomalous_2014,Hallaji2017,Kwiat2008}. 
Anomalous weak values are generally the result of interference effects~\cite{resch_experimental_2004}, and this is indeed the case here. 

In order to gain insight into the origin of negative atomic excitation times, we will consider a simpler system---which is loosely analogous to a cloud of atoms in the limit of low optical depth---in which quantum interference leads to negative dwell times: a Fabry-Perot cavity. (See \arf{sec:cavity_model}.)
In this analogy, the mode inside the cavity (labelled C) corresponds to the `atomic excitation', as illustrated in \frf{fig:cavity_model}(a).
Transmission through the cavity, labelled D, is chosen to represent scattering/loss in the atomic system, since energy must first propagate through the cavity before being transmitted (just as a photon must excite an atom before being scattered into a side mode).
Reflection from the cavity, labelled B, will therefore correspond to transmission in the atomic system. Note that reflection from the cavity includes fields that reflect from the front mirror without ever entering the cavity as well as fields that make one or more round trips inside the cavity.

The field reflection and transmission coefficients of the cavity are denoted by $r_1$, $t_1$, $r_2$ and $t_2$, where $r_j = \abs{r_j}\approx$ 1 and $t_j=i\abs{t_j}$ for $j=\{1,2\}$. The transmission coefficients $t_j$ are taken to be imaginary to satisfy unitarity, and also to match the atomic system, in which the atomic probability amplitude $\betf$ is 90 degrees out of phase with the incident field on resonance (as per~\erf{eqn:beta_solution_forward_general}). 
Additionally, we let $r_2<r_1$ so that the probability of transmission through the cavity is small, corresponding to the low-optical-depth limit of the atomic system. 
(See~\arf{sec:cavity_corr} for a more detailed explanation of this choice.)

We then consider a resonant, arbitrarily narrow-band single photon incident on the cavity, and a pointer system described by a wavefunction $\Psi(x)$ that weakly monitors the photon number inside of the cavity through a Von Neumann-type coupling \cite{VonNeumann_book}.
Note that $x$ here is a pointer variable that measures photon number, rather than a spatial variable.
Similar to \srf{sec:theory}, one can define a \textit{cavity dwell time} using the form of \erfa{eqn:tauT_weak_value}{eqn:tau0}, but with the operator $\hat{\phi}_{\rm probe}/\epsilon=\int_0^\bv dz \ket{e}_z\!\bra{e}$ replaced by a projector onto the cavity mode, $\ket{{\rm C}}\!\bra{{\rm C}}$.
We then wish to calculate the cavity dwell time post-selected on detecting the photon in B, denoted $\tau_{\rm B}$.

\begin{figure}[h!]
\includegraphics[width=1\linewidth]{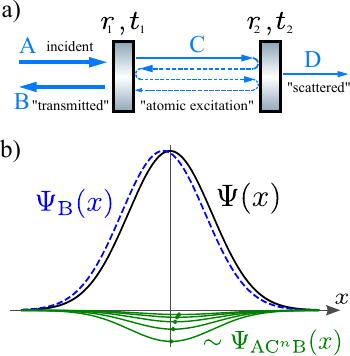}
\caption{a) The Fabry-Perot cavity used to model a cloud of atoms in the limit of low optical depth.
Field reflection and transmission coefficients are $r_j$ and $t_j$.
b) Qualitative illustration of the wavefunctions of the pointer used to measure the photon number in the cavity for several `Feynman paths': reflection off of mirror 1 (black, $\Psi(x)$), and completing $n$ round trips in the cavity followed by transmission into B (green, $\Psi_{{\rm AC}^n{\rm B}}(x)\approx \Psi(x-nx_{\rm rt})$, shown above multiplied by the probability amplitudes $-\abs{t_1}^2r_2(r_1r_2)^{n-1}$ from~\erf{eqn:cavity_psiB} for a handful of $n$ values). The peaks of the functions $\Psi_{{\rm AC}^n{\rm B}}(x)$ are indicated with dots.
The blue dashed curve is the normalized wavefunction of the pointer after post-selecting on detection of the photon in B, $\Psi_{\rm{B}}(x)$.}
\label{fig:cavity_model}
\end{figure}

 Though this weak value can be found by a direct calculation---which we present in \arf{sec:cavity_model}---it is illuminating to evaluate it by considering the evolution of the pointer system. (See~\arf{sec:cavity_corr} for details of the connection between these two approaches.)
 To begin, we note that there are infinitely many `Feynman paths' a photon can take to end up in B: it can be reflected from the front mirror with probability amplitude $r_1$ (path AB), or it can undergo any number $n$ of round trips through the cavity before being transmitted through mirror 1 with probability amplitude $-\abs{t_1}^2r_2(r_1r_2)^{n-1}$ (path AC$^n$B). 
 For path AB, the cavity will remain `empty', and the wavefunction describing the pointer in this case, $\Psi_{\rm{AB}}(x) := \Psi(x)$, will not experience a shift in its mean value. 
 For path AC$^n$B, the pointer will experience a shift $n\, x_{\rm{rt}}$ that will correspond to $n$ round trips through the cavity, and therefore a cavity dwell time of $n$ round trip times, $n\tau_{\rm{rt}}$: $\Psi_{{\rm AC}^n{\rm B}}(x)\approx \Psi(x-nx_{\rm rt})$.
The wavefunction for the pointer post-selected on detecting the photon in B, $\Psi_{\rm B}(x)$, is then a superposition of these wavefunctions, weighted by the probability amplitudes for each path leading to B: 
\begin{align}\label{eqn:cavity_psiB}
\Psi_{\rm B}(x) \propto &r_1\Psi(x) \nn\\
&-\abs{t_1}^2r_2\sum_{n=1}^\infty(r_1r_2)^{n-1}\Psi(x - n{x_{\rm{rt}}} ) \, .
\end{align}

We now consider the limit in which the shift in the pointer variable is small compared to the width of the original pointer distribution so that the measurement constitutes a weak measurement \cite{aharonov_how_1988,Dressel_review_WV}. 
In this limit, one can Taylor-expand the shifted wavefunctions to first order in $nx_{\rm{rt}}$ as $\Psi(x-nx_{\rm rt}) \approx \Psi(x)-nx_{\rm rt}\Psi^\prime(x)$~\footnote{Of course, for any fixed pointer distribution width $\Delta\Psi(x)$ there will be an $n$ beyond which this approximation breaks down; however, the magnitude of the probability amplitudes $-\abs{t_1}^2r_2(r_1r_2)^{n-1}$ decrease exponentially with $n$, meaning one can always choose a width $\Delta\Psi(x)$ that is sufficiently large that the probability amplitudes for $n\gtrapprox\Delta\Psi(x)/x_{\rm rt}$ are negligible}. 
Regrouping terms and summing the infinite series in~\erf{eqn:cavity_psiB} then leads to
\begin{equation}
    \Psi_{\rm B}(x) \propto \frac{r_1-r_2}{1-r_1r_2}\Psi(x) + \frac{r_2\abs{t_1}^2}{\qty(1-r_1r_2)^2}x_{\rm rt}\Psi^\prime(x)\,.
\end{equation}
Dividing through by $(r_1-r_2)/(1-r_1r_2)$ and viewing the result as a first-order Taylor expansion, we can then approximate $\Psi_{\rm B}(x)$ as
\begin{equation}\label{eqn:phiB_shift}
    \Psi_{\rm B}(x) \approx \Psi\qty(x -\qty[-\frac{r_2\abs{t_1}^2}{(1-r_1r_2)(r_1-r_2)}x_{\rm rt}] )\, .
\end{equation}
Since $x_{\rm rt}$ is the shift in the pointer due to a single round trip in the cavity, the quantity multiplying $x_{\rm rt}$ in~\erf{eqn:phiB_shift} is the effective number of round trips in the cavity experienced by a photon that ends up in mode B.
This means that the post-selected cavity dwell time $\tau_{\rm B}$ is simply the effective number of round trips multiplied by the round-trip time $\tau_{\rm rt}$, 
\begin{equation}\label{pointer tau_B}
    \tau_{\rm B} \approx -\tau_{\rm rt}\frac{r_2\abs{t_1}^2}{(1-r_1r_2)(r_1-r_2)} \, .
\end{equation}

We therefore find that although each of the paths AC$^n$B corresponds to a positive pointer shift $n x_{\rm rt}$, interference between all of these paths and the path AB in which the photon spends no time in the cavity leads to a net shift of the pointer in the negative direction, as illustrated qualitatively in~\frf{fig:cavity_model}(b). 
Furthermore, it is shown in~\arf{sec:cavity_corr} that this time can also be written as $\tau_{\rm B}\approx -\od_0^c/\gamma_2$, where $\od^c_0$ is the `optical depth' of the cavity (the negative logarithm of the probability for the photon to end up in D) and $\gamma_2$ is the decay rate of mirror 2. 
This is exactly minus the overall (not post-selected) cavity dwell time, which in the limit of low optical depth is just $\od_0^c/\gamma_2$ (see~\arf{sec:cavity_model}).
The result is perfectly analogous with the atomic case in the narrow-band limit, for which we have $\ett\approx -\od_0/\Gamma$ and $\etav\approx +\od_0/\Gamma$.  Such negative times are a generic feature of post-selection on an outcome which exhibits destructive interference, as in the ``three-box problem'' \cite{aharonov_complete_1991, vaidman_weak-measurement_1996}, a three-path interferometer in which
 the weak value of the projector onto a path given a 180 degree phase shift is -1 \cite{resch_experimental_2004}.

The cavity model thus suggests that the negative atomic excitation time experienced by transmitted photons for the atomic system found in \srf{sec:group_delay_and_excitation_time}---specifically in the limit of low optical depth, where the cavity model is applicable---can be explained in terms of interfering pathways that lead to the photon being transmitted through the medium. 
In this case the interfering pathways would be: 1) the photon passes through the medium without interacting with any atoms (analogous to path AB in the cavity model), and 2) the photon is converted into an atomic excitation for some time and is then coherently transferred back into the photonic mode via Rabi flopping (analogous to paths AC$^n$B).

\section{Conclusion}
Motivated by a recent experiment~\cite{sinclair_measuring_2022}, we have calculated the time that a photon spends as an atomic excitation while propagating through a cloud of 2-level atoms, conditioned on the photon either being transmitted through the cloud in its original spatial mode, or being scattered into a side mode by the atoms.
Our calculation combines weak-value theory and quantum trajectory theory from open quantum systems, and we also provide an intuitive explanation of our results in one regime in terms of interfering paths. 

For the case of a transmitted photon, we find that, in all regimes,  the average amount of time it spends as an atomic excitation is exactly equal to the net group delay experienced by the transmitted photon pulse, which, for time-symmetric pulses, is equal to the time delay of the center of mass of the pulse. 
This may not sound surprising, but it is when one realizes that the group delay experienced by narrow-band, resonant pulses in a two-level medium is negative.  
Thus, we have a fascinating example of an anomalous weak value (negative atomic excitation time) that is found to correspond to a well-known physical quantity (the group delay). 
This suggests the possibility of broader interpretation of the group delay, even when it is negative---for instance, in transverse Fizeau experiments, recent observations of ``anomalous drag''~\cite{banerjee_anomalous_2022, safari_light-drag_2016} could be thought of as excitations drifting at the velocity of a moving medium, but for a negative amount of time.

In the event that a photon is scattered from the medium, we find that the atomic excitation time is equal to the sum of the net group delay experienced by the average scattered photon and the time delay associated with elastic scattering, known as the Wigner time delay. 
For time-symmetric incident pulses, this quantity is equal to the time delay of the center of mass of the scattered photon pulse. 
Thus, regardless of whether it is transmitted or scattered, the time which a time-symmetric single-photon pulse spends as an atomic excitation is equal to the time delay of the center of mass of the pulse.
To the best of our knowledge, this is the first time this equivalence has been reported in the context of absorptive media.

The physically measurable manifestation of a negative atomic excitation time would be a change in the sign of the pointer variable upon post-selecting on transmission of the photon. 
In particular, using the experimental setup described in \srf{sec:intro}, the post-selected phase shift of the probe beam would have a sign opposite to the average (not post-selected) probe phase shift. 
Note that this is different from the common intuition (see, e.g., Ref.~\cite{hau_light_1999}) that nonlinear effects are enhanced by large group delays (slow light), since in those cases it is the phase shift per {\em incident} photon which is being discussed, while our result is for the phase shift per {\em transmitted} photon, which may not even have the same sign.
An experiment is currently underway in order to test the theory presented here, in the regime where it predicts negative atomic excitation times \cite{InPreparation}. 

\begin{acknowledgments}
We would like to thank John Sipe for many helpful discussions, and for his early insight that $\ett$ might just be equal to the group delay (at least when the latter was positive).
This work was supported by NSERC Discovery Grant RGPIN-2020-05767, the Fetzer Franklin Fund of the John E. Fetzer Memorial Trust and the Australian Research Council
Centre of Excellence Program CE170100012: the Centre for Quantum Computation and
Communication Technology (CQC2T). AMS is a fellow of CIFAR.
\end{acknowledgments}

\appendix

\section{Solutions to equations of motion}\label{sec:app:solution_to_EOMs}
Here we solve \erfa{eqn:howard_forward}{eqn:howard_backward} for a single-photon pulse incident on a 1D medium of arbitrary shape defined by the spatially-dependent coupling constant $g(z)$. 

Taking the Fourier transforms of \erfa{eqn:howard_forward}{eqn:howard_backward}, we have, for the forwards- and backwards-evolving coefficients $\elff$, $\elfb$, $\betf$ and $\betb$ (labelled as $\alpha^{}_\rightleftarrows (z,t)$ and $\beta^{}_\rightleftarrows (z,t)$ here for the sake of compact notation):
\begin{gather}
    i\omega\elfw+c\pdv{\elfw}{z}=ig(z)\betw\, ,\\
    i\omega\betw =ig(z)\elfw\mp\frac{\Gamma}{2}\betw \,.
\end{gather}
Rearranging terms, we can write this as
\begin{align}
    \pdv{\elfw}{z} &= -\frac{1}{c}\qty[\frac{g^2(z)}{i\omega \pm \Gamma/2}+i\omega]\elfw \,, \\ \label{eqn:elfw_diff_eqn}
    \betw &= \frac{ig(z)}{i\omega\pm\Gamma/2}\elfw \, .
\end{align}
We let $g(z)$ be any continuous function that is zero outside of the domain $z\in[0,\bv]$. The general solution to \erf{eqn:elfw_diff_eqn} for $z\in[0,\bv]$ is then
\begin{align}
    \elfw =& C_\rightleftarrows(\omega) \nn\\
    & \times \exp\qty[-\int_{0}^z dz^\prime\qty(\frac{g^2(z^\prime)}{ic\omega\pm c\Gamma/2}+i\frac{\omega}{c})] \,,
\end{align}
\noindent where $C_\rightleftarrows(\omega)$ is a function that will be determined using the boundary conditions. For the forward solution, we get
\begin{align}\label{eqn:alpha_solution_forward_unsimplified}
    \elfwf =& \elfwbc \nn\\
    & \times\exp\qty[-\int_{0}^z dz^\prime\qty(\frac{g^2(z^\prime)}{ic\omega+ c\Gamma/2}+i\frac{\omega}{c})] \,,
\end{align}
\noindent where $\elfwbc := \tilde{\alpha}\ini(0,\omega)$. We then note that
\begin{equation}
    \frac{1}{ic\omega\pm c\Gamma/2} = \frac{4}{c\Gamma}\frac{-i\omega/\Gamma \pm 1/2}{1+\qty(2\omega/\Gamma)^2} \,,
\end{equation}
\noindent so that the first term in the exponential of \erf{eqn:alpha_solution_forward_unsimplified} can be written as
\begin{align}
    &-\int_{0}^z dz^\prime\frac{g^2(z^\prime)}{ic\omega+ c\Gamma/2} \nn\\
    &\quad= \frac{4}{c\Gamma}\int_{0}^z dz^\prime g^2(z^\prime)\qty[\frac{i\omega/\Gamma - 1/2}{1+\qty(2\omega/\Gamma)^2}] \nn\\ 
    &\quad= -i\phi(z,\omega)-\od(z,\omega)/2\,,
\end{align}
\noindent where $\phi(z,\omega)$ and $\od(z,\omega)$ are defined in \erf{eqn:phase_OD_definition}.
The forward solution is then
\begin{align}\label{eqn:app:solutions_forward_general}
    \elfwf=& \elfwbc \nn\\
    &\times\exp\qty[-i\phi(z,\omega)-\frac{\od(z,\omega)}{2} - i\frac{\omega}{c}z] \, , \\
    & \tilde{\beta}\ini(z,\omega) = \frac{ig(z)}{i\omega +\Gamma/2}\tilde{\alpha}\ini(z,\omega) \,.
\end{align}
The boundary condition for post-selecting on transmission is given by
\begin{equation}
\tilde{\alpha}\fin(\bv,\omega) = \frac{\tilde{\alpha}\ini(\bv,\omega)}{\sqrt{P_T}} \,,
\end{equation}
\noindent which leads to a backwards solution of
\begin{align}
    \elfwb &=\frac{\elfwbc}{\sqrt{P_T}}\exp\qty[-i\phi(z,\omega) +\frac{\od(z,\omega)}{2}] \nn \\
    &\quad \times \exp \qty[- \od(L,\omega) - \frac{i\omega z}{c}] \\
    &= \frac{\elfwf}{\sqrt{P_T}} 
    e^{ \od(z,\omega) - \od(\bv,\omega)} \,, \\ 
    \tilde{\beta}\fin(z,\omega) &= \frac{ig(z)}{i\omega-\Gamma/2}\tilde{\alpha}\fin(z,\omega)\,. \label{eqn:app:backwards_solutions_general}
\end{align}

\section{Equivalence of the net group delay and center-of-mass time delay}\label{sec:supplementary_broadband_group_delay}
Here we show that for incident pulses that are symmetric in time, the net group delay (i.e., \erf{eqn:tauT_broadband_pulses}) is equal to the time delay of the center of mass of the transmitted photon pulse compared to free-space propagation. To begin, we prove the following theorem:

\begin{theorem}\label{theorem:COM_group_delay}
    Let $g_i(t)$ be a function with $\intall dt\abs{g_i(t)}^2t = 0$, and let $g_f(t)$ be defined by its Fourier transform: $\tilde{g}_f(\omega)=\tilde{g}_i(\omega)\exp\qty(-ih(\omega))$, where $h(\omega)$ is a real function and we use the following Fourier transform convention:
    \begin{equation*}
        \mathscr{F}\qty[g(t)]:= \tilde{g}(\omega) = \frac{1}{\sqrt{2\pi}}\intall dt\, g(t)e^{-i\omega t}\, .
    \end{equation*}
    Then the following is true:
    \begin{equation*}
        \intall dt \abs{g_f(t)}^2t=\intall d\omega \abs{\tilde{g}_i(\omega)}^2\dv{h(\omega)}{\omega} \, .
    \end{equation*}
\end{theorem}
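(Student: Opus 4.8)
The plan is to move everything to the frequency domain, where the phase factor $\exp(-ih(\omega))$ acts transparently, and to exploit the standard Fourier correspondence that multiplication by $t$ in the time domain corresponds to applying $i$ times the $\omega$-derivative in the frequency domain. Concretely, with the stated convention one checks directly that $\mathscr{F}[t\,g(t)](\omega) = i\,\dv{\tilde g}{\omega}$. Applying Parseval's theorem to the left-hand side, I would write
\begin{equation*}
\intall dt\, t\,|g_f(t)|^2 = \intall dt\, g_f^*(t)\,\qty[t\,g_f(t)] = \intall d\omega\, \tilde g_f^*(\omega)\, i\,\dv{\tilde g_f}{\omega}\,,
\end{equation*}
which converts the target quantity into a frequency integral in which the defining relation for $g_f$ can be inserted.

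Next I would substitute $\tilde g_f(\omega) = \tilde g_i(\omega)\,e^{-ih(\omega)}$ and differentiate by the product rule, using $\dv{\tilde g_f}{\omega} = e^{-ih}\qty[\dv{\tilde g_i}{\omega} - i\,h'(\omega)\,\tilde g_i]$. The factor $e^{-ih}$ cancels against the $e^{+ih}$ coming from $\tilde g_f^*$, so the integrand splits cleanly into two pieces,
\begin{equation*}
\tilde g_f^*\, i\,\dv{\tilde g_f}{\omega} = i\,\tilde g_i^*\,\dv{\tilde g_i}{\omega} + h'(\omega)\,|\tilde g_i(\omega)|^2\,.
\end{equation*}
Integrated over $\omega$, the second term is exactly the right-hand side of the theorem, so the whole argument reduces to showing that the first term integrates to zero.

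The key observation---and the step that carries the actual content---is that this leftover term is nothing but the center of mass of the \emph{input} pulse in disguise. Running the same Parseval identity backwards,
\begin{equation*}
\intall d\omega\, i\,\tilde g_i^*(\omega)\,\dv{\tilde g_i}{\omega} = \intall dt\, g_i^*(t)\,\qty[t\,g_i(t)] = \intall dt\, t\,|g_i(t)|^2\,,
\end{equation*}
which vanishes by the hypothesis $\intall dt\, |g_i(t)|^2 t = 0$. This is precisely why the centering assumption on $g_i$ was imposed, and collecting the two pieces then yields the claimed identity.

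I do not anticipate a serious obstacle; once the $t\leftrightarrow i\,\dv{\tilde g}{\omega}$ correspondence and Parseval's theorem are in hand, the argument is essentially bookkeeping. The only points requiring care are (i) assuming enough decay and integrability of $g_i$ (e.g.\ $g_i$ and $t\,g_i$ both square-integrable) to justify Parseval and the differentiation step, and (ii) checking that the leftover frequency integral is genuinely real---which is automatic, since it equals the manifestly real time-domain integral $\intall dt\, t\,|g_i(t)|^2$. Hence no explicit real-part extraction is needed, and the two real quantities in the statement are matched term by term.
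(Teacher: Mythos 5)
Your proposal is correct and follows essentially the same route as the paper's own proof: Parseval's theorem combined with the $t \leftrightarrow i\,\dv{}{\omega}$ correspondence, the product rule applied to $\tilde g_i(\omega)e^{-ih(\omega)}$, and the identification of the leftover term $i\intall d\omega\, \tilde g_i^*\,\dv{\tilde g_i}{\omega}$ as the centered first moment $\intall dt\,\abs{g_i(t)}^2 t = 0$. Your added remarks on integrability and on the automatic reality of the leftover integral are sound bonuses, not deviations.
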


\begin{proof}
    First, we make use of some basic properties of the Fourier transform. For any functions $f(t)$ and $s(t)$, we have
    \begin{align}
        \intall dt f^*(t)s(t) &= \intall d\omega \tilde{f}^*(\omega)\tilde{s}(\omega) \, ,\\
        \mathscr{F}\qty[f(t)t] &= i\dv{\tilde{f}(\omega)}{\omega} \, ,
    \end{align}
    \noindent where the star superscript indicates complex conjugation. Combining these properties, we have
\begin{align}\label{eqn:supplementary_theorem1_hammer}
        \intall dt\abs{f(t)}^2t &= \intall dt f^*(t)f(t)t \nn\\
        &= \intall d\omega \tilde{f}^*(\omega)\mathscr{F}\qty[f(t)t]\,, \nn \\
        &= i\intall d\omega \tilde{f}^*(\omega)\dv{\tilde{f}(\omega)}{\omega} \:.
    \end{align}
    We now apply~\erf{eqn:supplementary_theorem1_hammer} to $g_f(t)$ twice as follows:
    \begin{align}
        &\intall dt\abs{g_f(t)}^2t \nn\\
        &=\intall d\omega \tilde{g}_f^*(\omega)\qty[i\dv{\tilde{g}_i(\omega)}{\omega}e^{-ih(\omega)}+\dv{h(\omega)}{\omega}\tilde{g}_f(\omega)] \nn \\
        &= i \intall \hspace{-1mm}d\omega \tilde{g}_i^*(\omega)\dv{\tilde{g}_i(\omega)}{\omega}+\hspace{-0.5mm}\intall \hspace{-1mm}d\omega\abs{\tilde{g}_f(\omega)}^2\dv{h(\omega)}{\omega} \nn \\
        &= \intall dt\abs{g_i(t)}^2t + \intall d\omega\abs{\tilde{g}_i(\omega)}^2\dv{h(\omega)}{\omega}\, ,
    \end{align}
    where we have used the fact that $\abs{g_f(t)}^2=\abs{g_i(t)}^2$. Since $\intall dt\abs{g_i(t)}^2t=0$, we then have
    \begin{equation}
        \intall dt\abs{g_f(t)}^2t =  \intall d\omega\abs{\tilde{g}_i(\omega)}^2\dv{h(\omega)}{\omega} \:, 
    \end{equation}
    as needed.
\end{proof}

We now make use of Theorem~\ref{theorem:COM_group_delay} with $\tilde{g}_i(\omega):=\elfwbc\exp[-\od_0\mathcal{L}(\omega)/2]$ and $h(\omega):=\phi(L,\omega)+\omega L/c$, so that $\tilde{g}_f(\omega)=\tilde{\alpha}_\rightarrow(L,\omega)$. 
Recall that $\alpha_{\rm in}(t)$ is symmetric in time by hypothesis, so that $\elfwbc$ is real. Also, ${\cal L}(\omega)$ of Eq.~(\ref{eqn:defLorentzian}) is real, so $\tilde{g}_i(\omega)$ is also real and therefore $g_i(t)$ is symmetric in time and satisfies $\intall dt\abs{g_i(t)}^2t=0$.
The theorem then gives
\begin{align}
    &\intall dt\abs{\alpha_\rightarrow(L,t)}^2t \qquad \quad \quad \, \,\nn\\
    &= \intall \hspace{-1mm}d\omega\abs{\elfwbc}^2 e^{-\od_0\mathcal{L}(\omega)}\qty[\dv{\phi(L,\omega)}{\omega}+\frac{L}{c}]\, .
\end{align}
Using the expressions for $P_T$ given in~\erf{eqn:P_T_frequency_space}, the time delay of the center of mass of the pulse compared to free-space propagation is then
\begin{align}
    &\frac{\intall dt\abs{\alpha_\rightarrow(L,t)}^2t}{\intall dt\abs{\alpha_\rightarrow(L,t)}^2}-\frac{L}{c} \nn\\
    &= \frac{c}{P_T}\intall dt\abs{\alpha_\rightarrow(L,t)}^2t-\frac{L}{c} \nn\\
    &= \frac{c}{P_T}\intall d\omega\abs{\elfwbc}^2 e^{-\od_0\mathcal{L}(\omega)}\gd(\omega)\, ,
\end{align}
\noindent where $\gd(\omega)=\dv*{\phi(L,\omega)}{\omega}$ is the narrow-band group delay for the frequency component $\omega$.

\section{Atomic excitation time experienced by scattered photons}\label{sec:app_scattered_photons}
In this section we will derive an expression for the atomic excitation time experienced by scattered photons, $\ets$. 
To start, we assume that we have a 4$\pi$ array of imaging detectors for the scattered photon. 
Let us assume that a scattered photon is detected in a time $[T,T+dt)$ from position $[Z,Z+dz)$. 
After this detection event, the cross-phase shift on the probe beam is zero, and the post-selected state of the system at time $T$ is $\propto \ket{e}_Z$. 
That is, considering a retrodicted state, we have the final conditions $\alpha_{\leftarrow|Z,T}(z,T):= 0$, $\beta_{\leftarrow|Z,T}(z,T)\propto \delta(z-Z)$, where the $Z,T$ subscripts specify the position of the scattering event and the time of the detection event.
For specificity, we take $\beta_{\leftarrow|Z,T}(z,T)=\delta(z-Z)$. Note that this choice leaves the final state unnormalized and causes $\beta_{\leftarrow|Z,T}(z,T)$ to have units of length$^{-1}$ instead of length$^{-1/2}$; however, this choice is valid since the final state appears in both the numerator and denominator of the weak value formula.

These final conditions are evolved backwards in the same way as before, using \erf{eqn:howard_backward}, to give $\alpha_{\leftarrow|Z,T}(z,t)$ and $\beta_{\leftarrow|Z,T}(z,t)$ for $t\leq T$. 
Using the weak value formula from \erf{eqn:wv} in the main text with the post-selected state described above, the ensemble average of the atomic excitation time experienced by this scattered photon, denoted $\tau_{S|Z,T}$, is given by
\begin{widetext}
\begin{equation}
    \tau_{S|Z,T} = \text{Re} \int_{-\infty}^T dt \frac{\int_0^\bv dz \betf\beta^*_{\leftarrow|Z,T}(z,t)}{\int_0^\bv dz^\prime\qty[\beta\ini(z^\prime,t)\beta^*_{\leftarrow|Z,T}(z^\prime,t)+\alpha\ini(z^\prime,t)\alpha^*_{\leftarrow|Z,T}(z^\prime,t)]} \,.
\end{equation}
\end{widetext}
As in the derivation of $\ett$ (\erf{eqn:tauT_weak_value}) in the main text, the denominator in this equation is independent of $t$. 
For simplicity, we evaluate it at $t=T$, which gives
\begin{equation}
    \tau_{S|Z,T} = \text{Re} \int_{-\infty}^T dt \frac{\int_0^\bv dz \betf\beta^*_{\leftarrow|Z,T}(z,t)}{\beta\ini(Z,T)} \,.
\end{equation}

Now, the probability of no detection up to time $T$ and then a detection in an interval of duration $dT$ about time $T$ and a position interval of size $dZ$ about $Z$, given that that a photon is scattered, is $\Gamma\abs{\beta\ini(Z,T)}^2dTdZ/P_S$.
Averaging over the detection time and scattering location of this photon, the ensemble average of the atomic excitation time experienced by a single scattered photon is therefore given by
\begin{align}\label{eqn:app:tauS_full_expression}
    \ets &= \frac{1}{P_S}\intall dT \;\Gamma \int_0^\bv dZ \abs{\beta\ini(Z,T)}^2 \tau_{S|Z,T}\nn\\
    &=\frac{1}{P_S}\text{Re}\intall dT \;\Gamma \int_0^\bv dZ \beta^*\ini(Z,T)\nn\\
    &\quad \times\int_{-\infty}^T dt\int_0^\bv dz \betf\beta^*_{\leftarrow|Z,T}(z,t) \,.
\end{align}
\noindent We have confirmed numerically that the above expression and the expressions for $\ett$ and $\etav$ given in ~\erfa{eqn:tauT_solution}{eqn:tau0} of the main text satisfy the probability sum rule, $\etav = P_S\ets+P_T\ett$.

\section{Derivation of~\erf{eqn:scattered_delay_time} (scattering time delay)}\label{sec:scattered_delay_time_derivation}
In this section we will prove that  for incident pulses which are symmetric in time, the time delay of the center of mass of the average scattered photon pulse is equal to a weighted average of the expression for $\sdt(\omega,\od_0)$ given in~\erf{eqn:scattered_delay_time}, taken over the spectrum of the scattered light.  

Given that scattering occurs from the excited state at a constant rate $\Gamma$, the time profile of the average scattered photon pulse---as measured, for example, by summing the signals from an array of 4$\pi$ imaging detectors surrounding the atoms---will simply follow the excited state population, $\int_0^\bv dz\abs{\betf}^2$. 
With $t=0$ defined as the time at which the center of mass of the incident photon pulse arrives at $z=0$, the time delay $\sdt$ of the center of mass of the scattered pulse will be given by
\begin{align}\label{eqn:supp_COM_scattering_time}
    \sdt &= \frac{\intall dt\int_0^\bv dz \abs{\betf}^2t}{\intall dt\int_0^\bv dz \abs{\betf}^2} \nn\\
    &= \frac{\Gamma}{P_S}\intall dt\int_0^\bv dz \abs{\betf}^2t \, ,
\end{align}
\noindent where we have used the fact that the integral in the denominator of the first line is equal to $\etav=P_S/\Gamma$, as shown in \srf{subsec:average_excitation_time}.
Our goal is to show that $\sdt$ is equal to the weighted average of the expression for $\sdt(\omega,\od_0)$ given in~\erf{eqn:scattered_delay_time}, weighted by the spectrum of the scattered light, $c\abs{\elfwbc}^2\qty[1-\exp(-\od_0\mathcal{L}(\omega))]$.

To begin, consider the solution for $\betwf$ from~\erf{eqn:app:solutions_forward_general}, 
\begin{align}
    \betwf &= \frac{ig(z)}{i\omega+\Gamma/2}\elfwbc \nn \\
    &\quad \times \exp\qty[-i\phi(z,\omega)-i\frac{\omega}{c}z-\frac{\od(z,\omega)}{2}]. \nn
\end{align}
We write this in a form for which 
Theorem~\ref{theorem:COM_group_delay} may be applied, 
by using the fact that $\elfwbc$ is real 
(since we are considering time-symmetric input pulses), 
\begin{align}
    \betwf &= \frac{g(z)\elfwbc e^{-\od(z,\omega)/2}}{\sqrt{\omega^2+(\Gamma/2)^2}} \nn \\
    & \quad \times \exp\qty[-i\qty(\phi(z,\omega)+\frac{\omega}{c}z)] \nn \\
    & \quad \times \exp\qty[-i \qty(\tan^{-1}\qty(\frac{2\omega}{\Gamma}) -\frac{\pi}{2})] \, .
\end{align}
\noindent 
Since we are interested in the delay of the scattered pulse compared to free-space propagation, we will ignore the $\exp(-i\omega z/c)$ term, as this term is simply the time delay associated with propagating to the average $z$ position at which a photon is scattered. 
We will therefore make use of Theorem~\ref{theorem:COM_group_delay} with
\begin{align}
    g_f(t)&:= \betf \, , \nn\\
    \tilde{g}_i(\omega) &:=  \frac{g(z)\elfwbc e^{-\od(z,\omega)/2}}{\sqrt{\omega^2+(\Gamma/2)^2}} \, , \nn\\
    h(\omega)&:= \phi(z,\omega)+\tan^{-1}\qty(\frac{2\omega}{\Gamma})-\frac{\pi}{2} \, .
\end{align}
Applying Theorem~\ref{theorem:COM_group_delay} then gives
\begin{align}\label{eqn:supp_scattered_theorem}
    &\frac{\Gamma}{P_S}\intall dt\int_0^\bv dz \abs{\betf}^2t \nn\\ 
    &=\frac{\Gamma}{P_S}\int_0^\bv dz\intall d\omega\abs{\betwf}^2\nn \\
    &\quad\times\dv{}{\omega}\qty[\phi(z,\omega)+\tan^{-1}\qty(\frac{2\omega}{\Gamma})] \, .
\end{align}
\noindent The second term on the right hand side of ~\erf{eqn:supp_scattered_theorem} is in fact equal to the weighted average of the Wigner time delay,
\begin{align}\label{eqn:supp_scattering_delay_second_term}
    &\frac{\Gamma}{P_S}\hspace{-1mm}\int_0^\bv \hspace{-1mm}dz\hspace{-1mm}\intall \hspace{-1mm}d\omega\abs{\betwf}^2\dv{}{\omega}\qty[\tan^{-1}\qty(\frac{2\omega}{\Gamma})] \nn\\
    &= \frac{\Gamma}{P_S}\hspace{-1mm}\intall \hspace{-1mm}d\omega\, \wt(\omega)\hspace{-1mm}\int_0^\bv \hspace{-1mm}dz\abs{\betwf}^2 \nn\\
    &= \frac{c}{P_S}\hspace{-1mm}\intall \hspace{-1mm}d\omega\, \abs{\elfwbc}^2\qty[1-e^{-\od_0\mathcal{L}(\omega)}]\wt(\omega) \, ,
\end{align}
\noindent where we have used the expression for $\int_0^\bv dz\abs{\betwf}^2$ derived in~\srf{subsec:average_excitation_time} (cf.~\erfa{eqn:tau0_frequency}{eqn:tau0_final}). 
The first term on the right hand side of ~\erf{eqn:supp_scattered_theorem} is
\begin{widetext}
\begin{align}\label{eqn:supp_scattering_delay_first_term}
    &\frac{\Gamma}{P_S}\int_0^\bv dz\intall d\omega\abs{\betwf}^2\dv{}{\omega}\qty[\phi(z,\omega)] \nn\\
    &= \frac{1}{P_S}\intall d\omega \frac{\abs{\elfwbc}^2}{\omega^2+(\Gamma/2)^2}\frac{(2\omega/\Gamma)^2-1}{\qty[1+(2\omega/\Gamma)^2]^2}\int_0^\bv dz g^2(z)e^{-\od(z,\omega)}\od(z,\omega) \nn \\
    &=\frac{c}{\Gamma P_S}\intall d\omega\abs{\elfwbc}^2\frac{(2\omega/\Gamma)^2-1}{1+(2\omega/\Gamma)^2}\int_0^\bv dz \pdv{\od(z,\omega)}{z}e^{-\od(z,\omega)}\od(z,\omega) \nn \\
    & = \frac{c}{\Gamma P_S}\intall d\omega\abs{\elfwbc}^2\frac{(2\omega/\Gamma)^2-1}{1+(2\omega/\Gamma)^2}\qty[1-(1+\od_0\mathcal{L}(\omega))e^{-\od_0\mathcal{L}(\omega)}]\nn\\
    &= \frac{c}{P_S}\intall d\omega\abs{\elfwbc}^2\qty[1-e^{-\od_0\mathcal{L}(\omega)}]\frac{1}{\Gamma}\frac{1-(2\omega/\Gamma)^2}{1+(2\omega/\Gamma)^2}\qty[\frac{\od_0\mathcal{L}(\omega)e^{-\od_0\mathcal{L}(\omega)}}{1-e^{-\od_0\mathcal{L}(\omega)}}-1] \, ,
\end{align}
\end{widetext}
\noindent where we have used the fact that $\partial_z\od(z,\omega) = 4g^2(z)\mathcal{L}(\omega)/c\Gamma$. Combining~\erfa{eqn:supp_scattering_delay_second_term}{eqn:supp_scattering_delay_first_term}, and comparing with the expression for $\sdt(\omega,\od_0)$ in the first line of~\erf{eqn:supp_tS_tauS_narrowband}, we have
\begin{equation}
    \sdt = \frac{c}{P_S}\intall d\omega\abs{\elfwbc}^2\qty[1-e^{-\od_0\mathcal{L}(\omega)}] \sdt(\omega,\od_0), 
\end{equation}
\noindent which is a weighted average of $\sdt(\omega,\od_0)$, taken over the spectrum of the scattered light. Note that for an arbitrarily narrow-band pulse with detuning $\Delta$, we have $c\abs{\elfwbc}^2=\delta(\omega-\Delta)$, in which case $\sdt=\sdt(\Delta,\od_0)$, where $\sdt(\Delta,\od_0)$ is as defined in~\erf{eqn:scattered_delay_time}.

\section{Equivalence of the time delay of the scattered pulse and $\ets$}\label{sec:scattering_delay_excitation}
Here we prove that for incident pulses which are symmetric in time, the time delay of the center of mass of the average scattered photon pulse is equal to the atomic excitation time experienced by the average scattered photon, $\ets$, as given by the sum rule $\ets=1/\Gamma-P_T\ett/P_S$. We begin with the case of an infinitely narrow-band photon with detuning $\Delta$.
From \erf{eqn:scattered_delay_time}, we have
\begin{align}\label{eqn:scattering_delay_time}
    \sdt(\Delta,\od_0) = \wt(\Delta) + \frac{\int_0^{\od_0} d\od_0^\prime e^{-\od_0^\prime\mathcal{L}(\Delta)}\gd(\Delta, \od_0^\prime)}{\int_0^{\od_0} d\od_0^\prime e^{-\od_0^\prime\mathcal{L}(\Delta)}} \,,
\end{align}
\noindent where $\mathcal{L}(\bullet)$ 
is the Lorentzian function defined in \erf{eqn:defLorentzian}. 
We will focus on the second term, which is simply a weighted average of the group delay where the weighting function is the probability of the photon propagating to each optical depth $\od_0^\prime$ without being scattered. The group delay in the above expression can be written as
\begin{align}\label{eqn:group_delay_Lorentzian}
    \gd(\Delta, \od_0^\prime) = -\frac{\od_0^\prime}{\Gamma}\mathcal{L}^2(\Delta)\qty[1-\qty(2\Delta/\Gamma)^2] \,.
\end{align}
\noindent The numerator in the second term on the right hand side of~\erf{eqn:scattering_delay_time} is therefore
\begin{align}
    &\int_0^{\od_0} d\od_0^\prime e^{-\od_0^\prime\mathcal{L}(\Delta)}\gd(\Delta, \od_0^\prime) \nn\\
    &= -\frac{\mathcal{L}^2(\Delta)}{\Gamma}\qty[1-\qty(\frac{2\Delta}{\Gamma})^2] \int_0^{\od_0} d\od_0^\prime e^{-\od_0^\prime\mathcal{L}(\Delta)}\od_0^\prime \nn\\
    & = \frac{\mathcal{L}(\Delta)}{\Gamma}\qty[1-\frac{4\Delta^2}{\Gamma^2}]\qty[\od_0e^{-\od_0\mathcal{L}(\Delta)} - \frac{1-e^{-\od_0\mathcal{L}(\Delta)}}{\mathcal{L}(\Delta)}] \nn\\
    & = \frac{1}{\Gamma}\qty[1-\frac{4\Delta^2}{\Gamma^2}] \qty[\od_0\mathcal{L}(\Delta)e^{-\od_0\mathcal{L}(\Delta)} + e^{-\od_0\mathcal{L}(\Delta)} -1] .
\end{align}
\noindent Meanwhile, the normalization term is simply
\begin{align}
    \int_0^{\od_0} d\od_0^\prime e^{-\od_0^\prime\mathcal{L}(\Delta)} = \frac{1}{\mathcal{L}(\Delta)}\qty[1-e^{-\od_0\mathcal{L}(\Delta)}] \,.
\end{align}
\noindent Combining all of the above, the scattering time delay for infinitely narrow-band light at detuning $\Delta$ is 
\begin{align}\label{eqn:supp_tS_tauS_narrowband}
    \sdt(\Delta,\od_0) &= \wt(\Delta) + \frac{1}{\Gamma}\mathcal{L}(\Delta)\qty[1-\qty(2\Delta/\Gamma)^2] \nn\\
    &\quad \times\qty[\frac{\od_0\mathcal{L}(\Delta)e^{-\od_0\mathcal{L}(\Delta)}}{1-e^{-\od_0\mathcal{L}(\Delta)}}-1] \nn\\
    &= \frac{2}{\Gamma}\mathcal{L}(\Delta)-\frac{1}{\Gamma}\mathcal{L}(\Delta)\qty[2-\frac{1}{\mathcal{L}(\Delta)}] \nn\\
    &\quad- \gd(\Delta,\od_0)\frac{e^{-\od_0\mathcal{L}(\Delta)}}{1-e^{-\od_0\mathcal{L}(\Delta)}} \nn\\
    &= \frac{1}{\Gamma} - \frac{e^{-\od_0\mathcal{L}(\Delta)}}{1-e^{-\od_0\mathcal{L}(\Delta)}}\ett(\Delta,\od_0) \nn\\
    &= \ets(\Delta,\od_0) \,,
\end{align}
\noindent where we have used the fact that $\wt(\Delta) = (2/\Gamma)\mathcal{L}(\Delta)$, the fact that $\gd(\Delta,\od_0) = \ett(\Delta,\od_0)$ and \erf{eqn:group_delay_Lorentzian}. 
For a broad-band pulse, the narrow-band expression for $\sdt(\Delta,\od_0)$ can be averaged over the spectrum of the scattered light to obtain the time delay of the center of mass of the scattered pulse, as shown in~\arf{sec:scattered_delay_time_derivation}. 
Using the second last equality in~\erf{eqn:supp_tS_tauS_narrowband}, we have
\begin{widetext}
\begin{align}
    \sdt &= \frac{c\intall d\omega\abs{\elfwbc}^2\qty[1-e^{-\od_0\mathcal{L}(\omega)}]\sdt(\omega,\od_0)}{c\intall d\omega\abs{\elfwbc}^2\qty[1-e^{-\od_0\mathcal{L}(\omega)}]} \nn \\
    &= \frac{c\intall d\omega\abs{\elfwbc}^2\qty[\qty(1-e^{-\od_0\mathcal{L}(\omega)})/\Gamma-e^{-\od_0\mathcal{L}(\omega)}\ett(\omega,\od_0)]}{c\intall d\omega\abs{\elfwbc}^2\qty[1-e^{-\od_0\mathcal{L}(\omega)}]} = \frac{1}{\Gamma}-\frac{P_T}{P_S}\ett = \ets \, ,
\end{align}
\end{widetext}
\noindent where we have used the definition of the net group delay from \erf{eqn:tauT_broadband_pulses} and the expressions for $P_T$ and $P_S$ from~\erf{eqn:P_T_frequency_space}. Hence, for any input pulse that is symmetric in time, the time delay of the center of mass of the average scattered photon pulse, $\sdt$, is equal to the atomic excitation time experienced by the average scattered photon, $\ets$.

\section{Limiting behaviour of $\ets$ for broad-band pulses}\label{sec:tauS_approximations_BB}
Here we derive approximate expressions for $\ets$ for broad-band pulses in the limits of $\od_0\ll 1$ and $\od_0\gg1$. First, we simplify $\ets(\omega,\od_0)$ from~\erf{eqn:supp_tS_tauS_narrowband} in the limit of $\od_0\ll 1$ as follows:
\begin{align}
    \ets(\omega,\od_0) &= \frac{1}{\Gamma} - \frac{e^{-\od_0\mathcal{L}(\omega)}}{1-e^{-\od_0\mathcal{L}(\omega)}}\ett(\omega,\od_0) \nn\\
    &\approx \frac{1}{\Gamma} -\qty[\frac{1}{\od_0\mathcal{L}(\omega)}-\frac{1}{2}]\nn\\
    &\quad \times \qty[-\frac{\od_0\mathcal{L}(\omega)}{\Gamma}\frac{1-(2\omega/\Gamma)^2}{1+(2\omega/\Gamma)^2}]\nn \\
    &=\frac{1}{\Gamma}\qty[1+\frac{1-(2\omega/\Gamma)^2}{1+(2\omega/\Gamma)^2}]+\frac{1}{2}\gd(\omega,\od_0) \nn \\
    &= \wt(\omega)+\frac{1}{2}\gd(\omega,\od_0) \,,
\end{align}
\noindent where the approximation was a Taylor expansion to first order in $\od_0\mathcal{L}(\omega):=\od_0\qty[1+(2\omega/\Gamma)^2]^{-1}$. 
We then average this time over the spectrum of the scattered light, 
\begin{equation}\label{eqn:appendix_tauS_approx_1}
    \ets \approx \frac{\intall d\omega c\abs{\elfwbc}^2\qty[1-e^{-\od_0\mathcal{L}(\omega)}]\ets(\omega,\od_0)}{\intall d\omega c\abs{\elfwbc}^2\qty[1-e^{-\od_0\mathcal{L}(\omega)}]} \,,
\end{equation}
\noindent where $c\abs{\elfwbc}^2$ is the normalized spectrum of the input pulse. 
Since we wish to investigate the behaviour of arbitrarily broad-band pulses, we replace $c\abs{\elfwbc}^2$ in~\erf{eqn:appendix_tauS_approx_1} with 1, since the spectrum of the pulse will be approximately flat over the region in which the integrands are appreciably non-zero:
\begin{equation}
    \ets \approx \frac{\intall d\omega \qty[1-e^{-\od_0\mathcal{L}(\omega)}]\ets(\omega,\od_0)}{\intall d\omega \qty[1-e^{-\od_0\mathcal{L}(\omega)}]} \,.
\end{equation}
It is straightforward to show that Taylor expanding the numerator and denominator up to second order in $\od_0$ and performing the frequency integrals gives
\begin{align}\label{eqn:appendix_tauS_approx_BB_lowOD}
    \ets &\approx \frac{\frac{\pi}{2}\od_0-\frac{\pi}{4}\od_0^2+O(\od_0^3)}{\frac{\pi\Gamma}{2}\od_0-\frac{\pi\Gamma}{8}\od_0^2+O(\od_0^3)} \nn\\
    &=\frac{1}{\Gamma}\frac{1-\od_0/2+O(\od_0^2)}{1-\qty(\od_0/4+O(\od_0^2))} \nn\\
    &\approx \frac{1}{\Gamma}\qty[1-\od_0/2+O(\od_0^2)]\qty[1+\od_0/4+O(\od_0^2)] \nn\\
    &\approx\frac{1}{\Gamma} - \frac{\od_0}{4\Gamma} +O(\od_0^2) \, . 
\end{align}
Furthermore, for broad-band Gaussian incident pulses of RMS duration $\sigma \ll 1/\Gamma$, 
 and in the limit of $\od_0\ll 1$, it is straightforward to expand $P_T$ as 
\begin{equation}\label{eqn:P_T_approx_BB_lowOD}
    P_T\approx 1-\sqrt{\frac{\pi}{2}}\sigma\Gamma\od_0\,,    
\end{equation}
meaning that the effective optical depth $\od_{\rm eff}$ is given by
\begin{equation}
    \od_{\rm eff}=-\ln(P_T)\approx -\ln\qty(1-\sqrt{\frac{\pi}{2}}\sigma\Gamma\od_0)\approx \sqrt{\frac{\pi}{2}}\sigma\Gamma\od_0 \,.
\end{equation}
Plugging this into~\erf{eqn:appendix_tauS_approx_BB_lowOD} then gives $\ets$ as a function of $\od_{\rm eff}$,
\begin{equation}\label{eqn:tauS_approx_BB_lowOD}
    \ets \approx \frac{1}{\Gamma}-\sqrt{\frac{2}{\pi}}\frac{\od_{\rm eff}}{4\sigma\Gamma^2}\,.
\end{equation}

To determine the behavior of $\ets$ for $\od_0\gg1$, we note that the approximate form of $\ett$ in this regime (which is derived in \arf{sec:tauT_approximations_BB}) is simply $\ett\approx 0.5\od_{\rm eff}/\Gamma$.
Using this expression together with the probability sum rule, $\etav=P_S/\Gamma=P_S\ets+P_T\ett$, we can then approximate $\ets$ as
\begin{equation}\label{eqn:tauS_approx_BB_highOD}
    \ets\approx \frac{1}{\Gamma}-\frac{e^{-\od_{\rm eff}}}{1-e^{-\od_{\rm eff}}}\frac{\od_{\rm eff}}{2\Gamma}\,,
\end{equation}
where $P_T=1-P_S=\exp(-\od_{\rm eff})$.

The atomic excitation time $\ets$ experienced by the average scattered photon is plotted in~\frf{fig:tauS_approximations_BB} together with the approximations from \erfa{eqn:tauS_approx_BB_lowOD}{eqn:tauS_approx_BB_highOD}.

\begin{figure}[h!]
\includegraphics[width=1\linewidth]{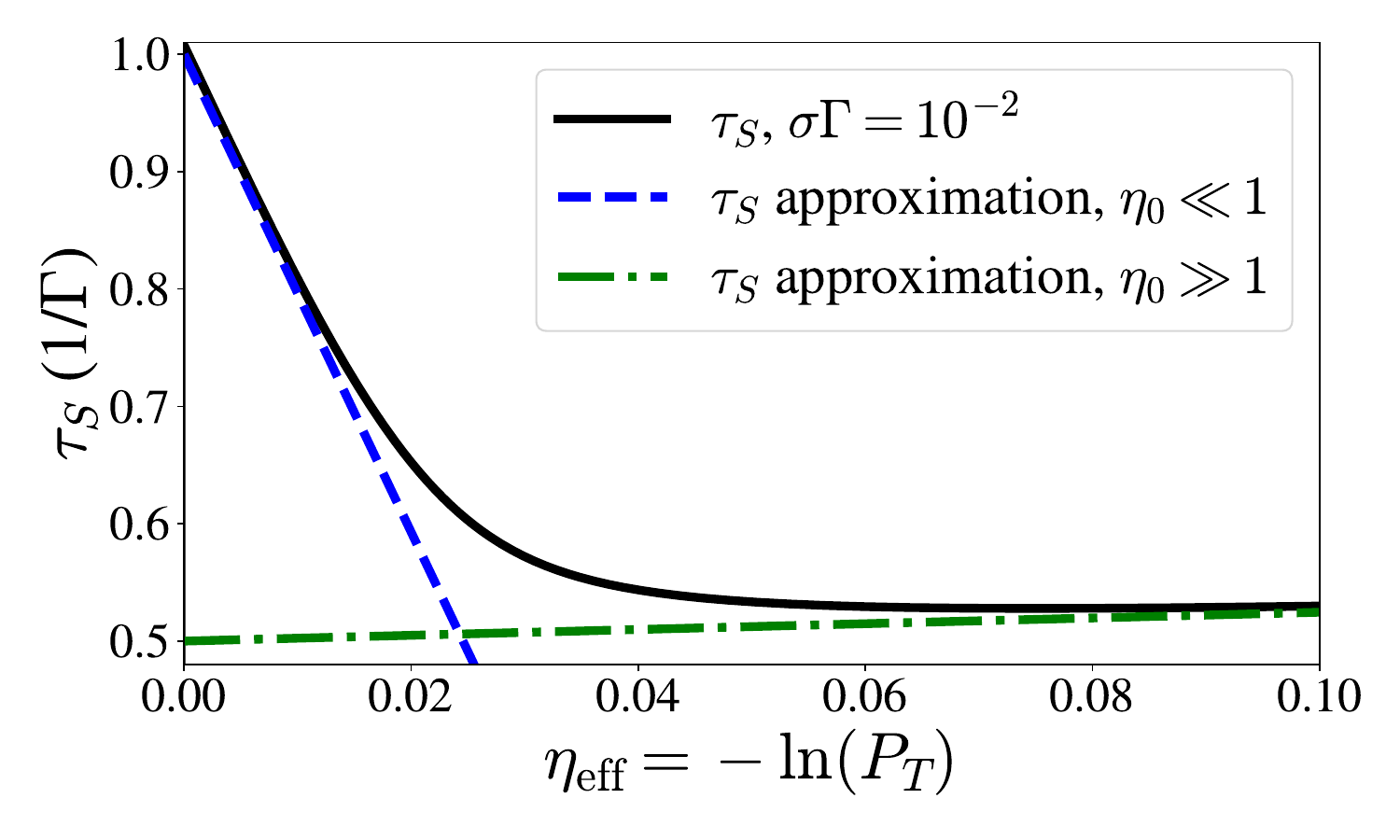}
\caption{Atomic excitation time $\ets$ experienced by scattered photons for a broad-band pulse vs $\od_{\rm eff}$, together with approximate forms in the limits of $\od_0\ll1$ (blue dashed, given by~\erf{eqn:tauS_approx_BB_lowOD}) and $\od_0\gg1$ (green dot-dashed, given by~\erf{eqn:tauS_approx_BB_highOD}).}
\label{fig:tauS_approximations_BB}
\end{figure}

\section{Limiting behavior of $\ett$ for broad-band pulses}\label{sec:tauT_approximations_BB}
Here we will derive an approximate form of $\ett$ for a broad-band Gaussian pulse of RMS duration $\sigma$ in the limits of $\od_0\ll1$ and $\od_0\gg1$. Beginning in the limit of $\od_0\ll1$, we have, making use of~\erf{eqn:appendix_tauS_approx_BB_lowOD},
\begin{equation}
    \ett=\frac{P_S}{P_T}\qty[\frac{1}{\Gamma}-\ets]\approx \frac{P_S}{P_T}\frac{\od_0}{4\Gamma}\,.
\end{equation}
We then note that for $\od_0\ll1$,
\begin{equation}
    \frac{P_S}{P_T}\approx P_S\approx \sqrt{\frac{\pi}{2}}\sigma\Gamma\od_0 \,,
\end{equation}
where we have made use of~\erf{eqn:P_T_approx_BB_lowOD}. Hence, we have
\begin{equation}\label{eqn:tauT_approx_BB_lowOD}
    \ett\approx \sqrt{\frac{\pi}{2}}\frac{\sigma}{4}\od_0^2 \,.
\end{equation}

We now turn to the limit of $\od_0\gg1$. The full expression for $\ett$ is
\begin{equation}\label{eqn:tauT_gaussian_full}
    \ett = \frac{1}{P_T}\intall d\omega\,c\abs{\elfwbc}^2e^{-\od_0\mathcal{L}(\omega)}\gd(\omega,\od_0)\,,
\end{equation}
where $c\abs{\elfwbc}^2$ is the normalized spectrum of the input pulse, $\mathcal{L}(\omega)= [1+(2\omega/\Gamma)^2]^{-1}$, $\gd(\omega,\od_0)$ is the narrow-band group delay (as defined in \erf{eqn:group_delay}) and $P_T$ is given by~\erf{eqn:P_T_frequency_space}.
We will assume a Gaussian input pulse of intensity duration $\sigma\ll1/\Gamma$, so that the input spectrum is given by
\begin{equation}
    c\abs{\elfwbc}^2 = \sqrt{\frac{2}{\pi}}\sigma e^{-2\sigma^2\omega^2}\,.
\end{equation}
Now, for $\od_0\gg1$, we can make the following approximation:
\begin{equation}
    \exp[-\frac{\od_0}{1+\qty(\frac{2\omega}{\Gamma})^2}] \approx \exp[-\od_0\qty(\frac{\Gamma}{2\omega})^2]\,.
\end{equation}
This approximation is valid here because the left hand side is approximately zero near resonance and only becomes non-negligible when $\od_0\mathcal{L}(\omega)$ is of order unity, at which point $\qty(2\omega/\Gamma)^2\gg1$, so $1+\qty(2\omega/\Gamma)^2\approx \qty(2\omega/\Gamma)^2$. With this approximation, $P_T$ becomes
\begin{equation}\label{eqn:P_T_approx_highOD}
    P_T\approx \sqrt{\frac{2}{\pi}}\sigma\intall d\omega e^{-2\sigma^2\omega^2}e^{-\od_0\frac{\Gamma^2}{4\omega^2}}=e^{-\sigma\Gamma\sqrt{2\od_0}}\,,
\end{equation}
from which it is clear that $\od_{\rm eff}\approx \sigma\Gamma\sqrt{2\od_0}$.
Similarly, we can replace $\gd(\omega,\od_0)$ in~\erf{eqn:tauT_gaussian_full} with an approximate form that is valid for $2\omega/\Gamma \gg1$,
\begin{equation}
    \gd(\omega,\od_0)=-\frac{\od_0}{\Gamma}\frac{1-\qty(\frac{2\omega}{\Gamma})^2}{\qty[1+\qty(\frac{2\omega}{\Gamma})^2]^2} \approx \frac{\od_0}{\Gamma}\qty(\frac{\Gamma}{2\omega})^2\,.
\end{equation}
We then have
\begin{align}
    \ett &\approx \frac{1}{P_T}\sqrt{\frac{2}{\pi}}\sigma \intall d\omega e^{-2\sigma^2\omega^2}e^{-\od_0\frac{\Gamma^2}{4\omega^2}}\,\frac{\Gamma\od_0}{4\omega^2}\nn \\
    & = \frac{1}{P_T}\sigma\sqrt{\frac{\od_0}{2}}e^{-\sigma\Gamma\sqrt{2\od_0}} \approx \sigma\sqrt{\frac{\od_0}{2}} \,,
\end{align}
where the last approximation makes use of \erf{eqn:P_T_approx_highOD}. We can then express this in terms of $\od_{\rm eff}\approx \sigma\Gamma\sqrt{2\od_0}$ as
\begin{equation}\label{eqn:tauT_approx_BB_highOD}
    \ett \approx \frac{\od_{\rm eff}}{2\Gamma}\,.
\end{equation}
The atomic excitation time $\ett$ experienced by transmitted photons is plotted in~\frf{fig:tauT_approximations_BB} together with the approximations from \erfa{eqn:tauT_approx_BB_lowOD}{eqn:tauT_approx_BB_highOD}.

\begin{figure}[h!]
\includegraphics[width=1\linewidth]{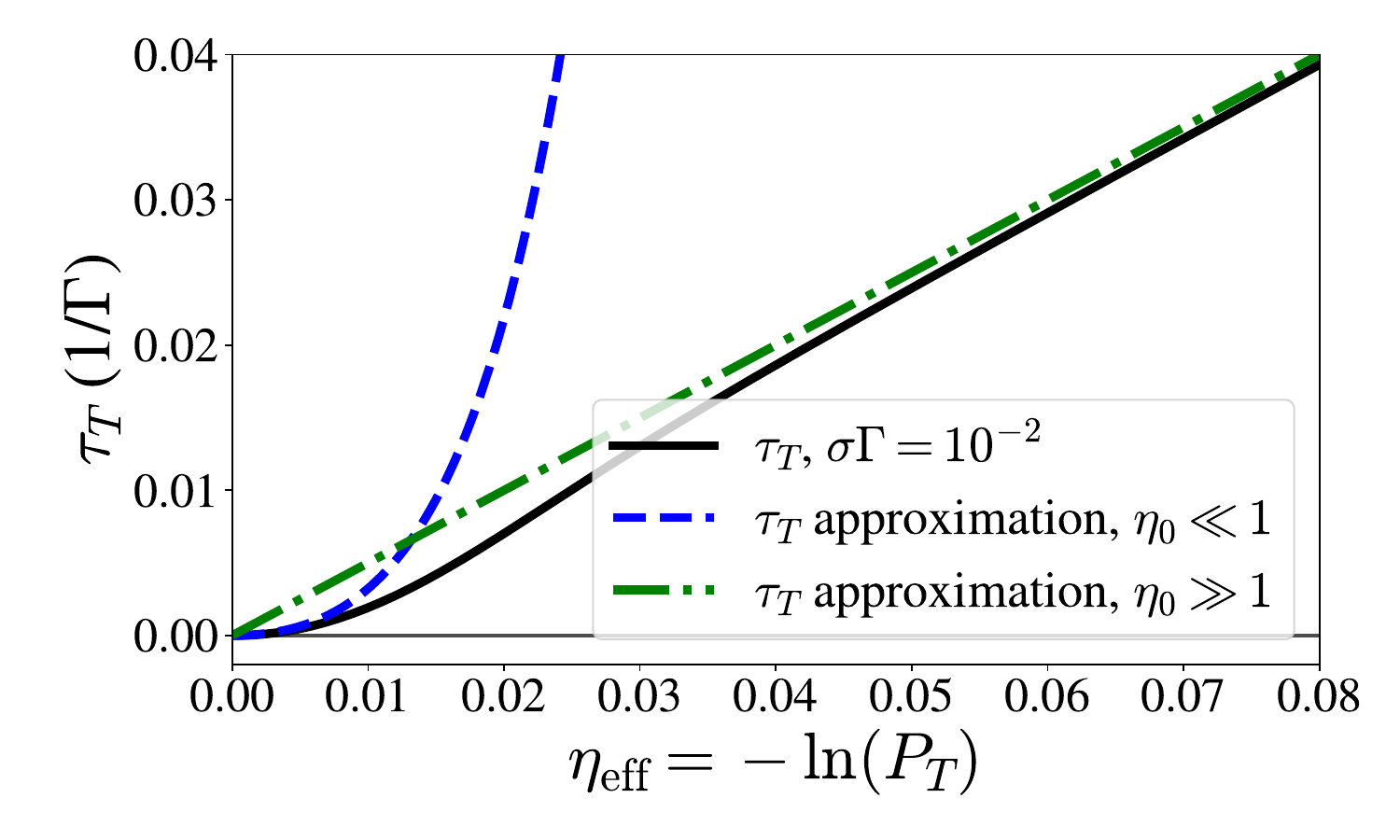}
\caption{Atomic excitation time $\ett$ experienced by transmitted photons for a broad-band pulse vs $\od_{\rm eff}$, together with approximate forms in the limits of $\od_0\ll1$ (blue dashed, given by~\erf{eqn:tauT_approx_BB_lowOD}) and $\od_0\gg1$ (green dot-dashed, given by~\erf{eqn:tauT_approx_BB_highOD}).}
\label{fig:tauT_approximations_BB}
\end{figure}

\section{Cavity model details}\label{sec:cavity_model}
Here we present a complete weak-value calculation of the cavity dwell time for the model presented in~\srf{sec:discussion}.

The model consists of a single-photon input field incident on a Fabry-Perot cavity. For mathematical convenience, we treat this as having zero size, and being located at $z$ = 0. 
The cavity has two mirrors with intensity decay rates $\gamma_1$ and $\gamma_2$.
The probability amplitude ${{\rm{\alpha_{in}}}(t)}$ describes the photon incident on mirror 1 and causes the probability amplitude $\beta(t)$ for the photon to be inside the cavity to grow from an initial value of zero. 
The amplitudes for the photon to be reflected from the cavity and transmitted through the cavity at time $t$ are denoted by ${{\rm{\alpha_{ref}}}(t)}$ and ${{\rm{\alpha_{tr}}}(t)}$, respectively. 
The time evolution of these amplitudes is governed by the following equations,
\begin{gather} \label{Field evolution}
    \Dot{\beta}(t) = -\dfrac{\gamma_{1} + \gamma_{2}}{2}\beta(t) - \sqrt{\gamma_{1}} \, \alpha_{\rm{in}}(t)\,,\\
    \alpha_{\rm{ref}}(t) = \alpha_{\rm{in}}(t) + \sqrt{\gamma_{1}}\,\beta(t)\,,\label{Ref_field}\\
    \alpha_{\rm{tr}}(t)= \sqrt{\gamma_{2}}\,\beta(t)\,.\label{Field evolution 3}
\end{gather} 
 These are isomorphic to the classical field equations for a driven cavity, and we may use the term ``field'' in place of ``probability amplitude'' below. 

The boundary condition used to solve the field differential equation is that the cavity field is zero for $t\rightarrow\pm\infty$. 
It is also necessary that the input field amplitudes vanish at the boundaries for normalization of the input field intensity:
\begin{gather}
    \beta(t \to -\infty) = \beta(t \to +\infty) = 0\,,\\
    \alpha_{\rm{in}}(t \to -\infty) = \alpha_{\rm{in}}(t \to +\infty) = 0\,.
\end{gather}
The mean number of photons in the cavity at time $t$ is given by $\abs{\beta(t)}^2$, and the mean number of photons incident on the cavity per unit time at time $t$ is $\abs{\alpha_{\rm in}(t)}^2$.
Since the input field is a single-photon field, $\int^{\infty}_{-\infty}dt\abs{\alpha_{\rm in}(t)}^2 = 1$.
Also, for this single-photon input field, the reflection and transmission probability densities are 
\begin{gather}\label{eqn:P_ref_density}
    {\rm{\rho}_{\rm{ref}}}(t) = \abs{\alpha_{\rm{ref}}(t)}^2\,,\hspace{3mm}{\rm{\rho}_{\rm{tr}}}(t) = \abs{\alpha_{\rm{tr}}(t)}^2\,,
\end{gather}
and the total reflection and transmission probabilities are found by integrating the respective probability densities over all time,
\begin{gather}\label{eqn:P_ref_0}
    P_{\rm ref} = \hspace{-1mm}\int^{\infty}_{-\infty}\hspace{-1mm}dt\abs{\alpha_{\rm ref}(t)}^2\,,\quad P_{\rm tr} = \hspace{-1mm}\int^{\infty}_{-\infty}\hspace{-1mm}dt\abs{\alpha_{\rm tr}(t)}^2\,.
\end{gather}

The fields propagating towards the left and right in this system can be represented by field amplitudes $\alpha_l(z,t)$ and $\alpha_r(z,t)$, respectively. 
For any position and time, these amplitudes can be expressed in terms of the amplitudes of the incident, reflected or transmitted fields at the time the field enters/exits the cavity as follows:
\begin{align}\label{alpha_definitions}
    &\alpha_l(z,t) = {{\rm{\alpha_{ref}}}(t+z/c)}\Theta(-z)\,,\\
    &\alpha_r(z,t)= {{\rm{\alpha_{in}}}(t-z/c)}\Theta(-z) + {{\rm{\alpha_{tr}}}(t-z/c)}\Theta(z)\,,
\end{align}
where $\Theta$ is the Heaviside step function with $\Theta(0) = 0$ to avoid the cavity itself.

Now, at any arbitrary time, the single photon can be present in any of the left or right moving fields or in the cavity. 
As shown in \frf{fig:cavity_model}, these possible locations for the photon are labelled A, B, C and D, which represent the photon being in the incident field, reflected field, inside the cavity and in the transmitted field, respectively.
The $\delta$-normalized state of a single photon at a position $z$ outside of the cavity which propagates either to the left ($l$) or to the right ($r$) can be defined as
\begin{align}
    &\ket{1}_{l,z} = \ket{\rm B}_{z}\Theta(-z)\,,\\&\ket{1}_{r,z} = \ket{\rm A}_z\Theta(-z) + \ket{\rm D}_z\Theta(z)\,,
\end{align}
and we let $\ket{\rm C}\bra{\rm C}$ denote a unit-normalized state with a photon inside the cavity (equivalent to an atomic excitation in this analogy) at position $z=0$. Hence, the normalized general state of the photon can be represented as a superposition of the above states as
\begin{align}
    \ket{\psi(t)} = &\beta(t) \ket{\rm C} + \int^{\infty}_{-\infty}\frac{dz}{\sqrt{c}}\,\alpha_l(z,t)\ket{1}_{l,z} \nn\\
    &+ \int^{\infty}_{-\infty}\frac{dz}{\sqrt{c}}\,\alpha_r(z,t)\ket{1}_{r,z}\,.
\end{align}

We first note that the cavity dwell time of the average (not post-selected) photon---which is simply the time integral of the average photon number in the cavity---is given by
\begin{equation}
    \intall dt \abs{\braket{\rm C}{\psi(t)}}^2 = \intall dt \abs{\beta(t)}^2= P_{\rm tr}/\gamma_2\,, 
\end{equation}
where we have used \erfa{Field evolution 3}{eqn:P_ref_0}. Recalling that transmission in the cavity model corresponds to scattering in the atomic system, this result is perfectly analogous to the atomic system, for which we had $\etav=P_S/\Gamma$.

We now consider the case when a photon is reflected and then detected at a time $t_{\rm{d}}$ and position $z_{\rm{d}}\,(<0)$. Let $t_{\rm ref}$ be the time at which the photon leaves the cavity after reflection, which can be written as $t_{\rm ref} = t_{\rm{d}} - \abs{z_{\rm{d}}}/c$. The state of the photon when it is detected can be labelled as
\begin{equation}\label{post-state_ref}
    \ket{\chi_{t_{\rm ref}}(t_{\rm d})} := \ket{1}_{l,z_{\rm d}} = \ket{\rm B}_{z_{\rm d}}\,.
\end{equation}
The weak value for the cavity photon number measured at a time $t_{\rm{m}}$, and post-selected on a reflected photon leaving the cavity at a time $t_{\rm ref}$, is given by
\begin{align} \label{wv_ref}
    \left\langle n_{\rm{w}}\right\rangle_{\rm{ref}}(t_{\rm{m}},t_{\rm ref}) = \text{Re}\dfrac{\Braket{\chi_{t_{\rm ref}}(t_{\rm{m}})|\hat{n}_{\rm{c}}|\psi(t_{\rm{m}})}}{\Braket{\chi_{t_{\rm ref}}(t_{\rm{m}})|\psi(t_{\rm{m}})}}\,,
\end{align}
where $\hat{n}_{\rm{c}} := \ket{{\rm C}}\!\bra{{\rm C}}$ is the cavity photon number operator. 
Here $\ket{\psi(t_{\rm{m}})} = \hat{U}(t_{\rm{m}},-\infty)\ket{\psi(-\infty)}$ is the pre-selected state of the photon evolved from time $-\infty$ to $t_{\rm m}$ and 
$\ket{\chi_{t_{\rm ref}}(t_{\rm m})} = \hat{U}^{\dagger}(t_{\rm{d}},t_{\rm{m}})\ket{\chi_{t_{\rm ref}}(t_{\rm d})}$ is the post-selected state at time $t_{\rm m}$, which represents the state of a photon reflected at time $t_{\rm ref}$, evolved backwards to time $t_{\rm m}$. 

The cavity dwell time $\tau_{{\rm B},\,t_{\rm ref}}$ of a reflected photon which exits the cavity at time $t_{\rm ref}$ is found by integrating the weak value of the photon number over the weak measurement time,
\begin{equation}\label{tau_B_cond}
    \tau_{{\rm B},\,t_{\rm ref}} = \int_{-\infty}^{\infty}dt_{\rm{m}} \left\langle n_{\rm{w}} \right\rangle_{\rm{ref}}(t_{\rm{m}},t_{\rm ref})\,.
\end{equation}
The cavity dwell time of a reflected photon is then calculated by averaging the dwell time for a reflected photon which leaves the cavity at time $t_{\rm ref}$, weighted by the conditional probability density of the photon leaving the cavity at $t_{\rm ref}$ given that it is reflected, ${\rm \rho}(t_{\rm ref}|{\rm ref})$. 
This conditional probability density is given by
\begin{equation}
    {\rm \rho}(t_{\rm ref}|{\rm ref}) = \dfrac{{\rm \rho_{ref}}(t_{\rm ref})}{P_{\rm ref}}\,,
\end{equation}
where ${\rm \rho_{ref}}(t_{\rm ref})$ and $P_{\rm ref}$ are defined in \erfa{eqn:P_ref_density}{eqn:P_ref_0}, respectively. 
Hence, the average cavity dwell time for a reflected photon, $\tau_{\rm B}$, is given by
\begin{equation}\label{tau_B_definition}
     \tau_{\rm{B}} = \int_{-\infty}^{\infty}\,dt_{\rm ref}\,{\rm \rho}(t_{\rm ref}|{\rm ref})\,\tau_{{\rm B},\,t_{\rm ref}}\,.
\end{equation}

We now wish to calculate $\tau_{\rm B}$. We begin by finding the weak value of the cavity photon number given in~\erf{wv_ref}.
The denominator of this expression can be expanded as 
\begin{align}\label{eqn:cavity_wk_value_denom}
    \Braket{\chi_{t_{\rm ref}}(t_{\rm{m}})|\psi(t_{\rm{m}})} &= \bra{\chi_{t_{\rm ref}}(t_{\rm d})}\hat{U}(t_{\rm{d}},t_{\rm{m}})\ket{\psi(t_{\rm{m}})} \nn\\
    &= \bra{\chi_{t_{\rm ref}}(t_{\rm d})}\ket{\psi(t_{\rm{d}})} \nn\\
    &= \alpha_{\rm{ref}}(t_{\rm ref})/\sqrt{c}\,,
\end{align}
 since the overlap of the states remains unchanged in time. Furthermore, the numerator can be expanded as
 \begin{align}
     &\Braket{\chi_{t_{\rm ref}}(t_{\rm{m}})|\hat{n}_{\rm{c}}|\psi(t_{\rm{m}})}\nn\\
     &=\bra{\chi_{t_{\rm ref}}(t_{\rm d})}\hat{U}(t_{\rm{d}},t_{\rm{m}})\ket{{\rm C}}\!\bra{{\rm C}}\ket{\psi(t_{\rm{m}})}\nn\\
     &=\beta(t_{\rm m})\bra{\chi_{t_{\rm ref}}(t_{\rm d})}\hat{U}(t_{\rm{d}},t_{\rm{m}})\ket{{\rm C}}\,.
 \end{align}
To calculate $\hat{U}(t_{\rm{d}},t_{\rm{m}})\ket{{\rm C}}$, the field evolution equations (\erfs{Field evolution}{Field evolution 3}) have to be solved again with different initial conditions: an initial state at time $t_{\rm m}$ with the photon being inside the cavity and no input field (i.e, $\alpha_{\rm{in}}(t)=0$ for $t>t_{\rm m}$). 
Since the field inside the cavity will simply decay at a rate of $(\gamma_1+\gamma_2)/2:=\overline{\gamma}$, this state will be
\begin{align}
    &\hat{U}(t_{\rm{d}},t_{\rm{m}})\ket{{\rm C}} \nn\\
    =& e^{-\overline{\gamma}(t_{\rm{d}}-t_{\rm{m}})}\,\Theta(t_{\rm d} - t_{\rm{m}})\ket{\rm C} \nn\\
    & +\sqrt{\gamma_1}\int_{-\infty}^0 \frac{dz}{\sqrt{c}} e^{-\overline{\gamma}\qty(T_{{\rm d},z}-t_{\rm{m}})}\,\Theta\qty(T_{{\rm d},z}-t_{\rm{m}})\ket{1}_{l,z} \nn\\
    &+ \sqrt{\gamma_2}\int_0^\infty \frac{dz}{\sqrt{c}} e^{-\overline{\gamma}\qty(T_{{\rm d},z}-t_{\rm{m}})}\,\Theta\qty(T_{{\rm d},z}-t_{\rm{m}})\ket{1}_{r,z}\,,
\end{align}
where $T_{{\rm d},z}:= t_{\rm d}-\abs{z}/c$, so $T_{\rm d,0}=t_{\rm d}$ and $T_{{\rm d},z_{\rm d}}=t_{\rm ref}$.
The numerator of~\erf{wv_ref} is then given by
\begin{align}\label{eqn:cavity_wk_value_numerator}
    &\beta(t_{\rm m})\bra{\chi_{t_{\rm ref}}(t_{\rm d})}\hat{U}(t_{\rm{d}},t_{\rm{m}})\ket{{\rm C}}\nn\\
    &=\sqrt{\gamma_1}\beta(t_{\rm m}) e^{-(\gamma_1 + \gamma_2)(t_{\rm{ref}}-t_{\rm{m}})/2}\,\Theta(t_{\rm ref} - t_{\rm{m}})/\sqrt{c}\,.
\end{align}
Here the Heaviside function causes the weak value to be zero when the weak measurement is performed after the photon has exited the cavity via reflection. 

Before proceeding, it is useful to express some quantities in frequency space. Firstly, transforming the evolution equations~\erfa{Field evolution}{Ref_field} into the frequency domain gives
\begin{align} \label{Fields_omega}
    \tilde{\beta}(\omega) &= -\dfrac{2\sqrt{\gamma_{1}}\,\tilde{\alpha}_{\rm{in}}(\omega)}{\gamma_{1}+\gamma_{2} + 2i\omega}\,,\\ \label{field soln}
    \tilde{\alpha}_{\rm{ref}}(\omega) &= \bigg(\dfrac{\gamma_{2}-\gamma_{1} + 2i\omega}{\gamma_{1}+\gamma_{2} + 2i\omega}\bigg)\tilde{\alpha}_{\rm{in}}(\omega)\,,\\ 
    \tilde{\alpha}_{\rm{tr}}(\omega) &= -\dfrac{2\sqrt{\gamma_{1}\gamma_{2}}\,\tilde{\alpha}_{\rm{in}}(\omega)}{\gamma_{1}+\gamma_{2} + 2i\omega}\,.
\end{align} 
Additionally, the reflection probability can be expressed as
\begin{equation} \label{P_ref_w}
    P_{\rm ref} = \hspace{-1mm}\int_{-\infty}^{\infty}d\omega\,{\rm {\tilde{\rho}_{ref}}}(\omega)\,,
\end{equation}
\noindent where
\begin{equation}\label{rho_ref_w}
    {\rm{\tilde{\rho}_{ref}}}(\omega) = \abs{\tilde{\alpha}_{\rm{ref}}(\omega)}^2 = \dfrac{(\gamma_1 - \gamma_2)^2 + 4\omega^2}{(\gamma_1 + \gamma_2)^2 + 4\omega^2}\abs{\tilde{\alpha}_{\rm{in}}(\omega)}^2\,.
\end{equation}
Here the tildes represent a Fourier transform and~\erf{P_ref_w} follows from Plancherel's theorem. 
Also, $\omega=0$ corresponds to the resonance frequency of the cavity. 

We can now express $\beta(t_{\rm m})$ in terms of its Fourier transform $\tilde{\beta}(\omega)$ as
\begin{align}\label{eqn:cavity:beta_transform}
    \beta(t_{\rm m}) &= \frac{1}{\sqrt{2\pi}}\intall d\omega \tilde{\beta}(\omega) e^{i\omega t_{\rm m}} \nn\\
    &= -2\frac{\sqrt{\gamma_1}}{\sqrt{2\pi}}\intall d\omega \frac{\tilde{\alpha}_{\rm in}(\omega)e^{i\omega t_{\rm m}}}{\gamma_1+\gamma_2+2i\omega}\,,
\end{align}
\noindent where we have used~\erf{Fields_omega}.

Using~\erf{wv_ref}, \erf{eqn:cavity_wk_value_denom}, \erf{eqn:cavity_wk_value_numerator} and \erf{eqn:cavity:beta_transform}, the weak value for the cavity photon number turns out to be
\begin{widetext}
\begin{align}
    \left\langle n_{\rm{w}}\right\rangle_{\rm{ref}}(t_{\rm{m}},t_{\rm ref}) = -\dfrac{2\gamma_1}{\sqrt{2\pi}\alpha_{\rm{ref}}(t_{\rm ref})}\displaystyle\int_{-\infty}^{\infty}d\omega\,\dfrac{\tilde{\alpha}_{\rm{in}}(\omega)}{\gamma_1 + \gamma_2 + 2i\omega}e^{i\omega t_{\rm{m}}-(\gamma_1 + \gamma_2)(t_{\rm{ref}}-t_{\rm{m}})/2}\,\Theta(t_{\rm ref} - t_{\rm{m}})\,.
\end{align}
\end{widetext}
Integrating this expression over $t_{\rm m}$ then gives 
\begin{align}\label{eqn:tauB_tref_frequency}
    &\tau_{{\rm B},t_{\rm ref}}=\int_{-\infty}^{\infty}dt_{\rm{m}} \left\langle n_{\rm{w}} \right\rangle_{\rm{ref}}(t_{\rm{m}},t_{\rm ref}) \nn\\
    &=-\frac{4\gamma_1}{\sqrt{2\pi}\alpha_{\rm ref}(t_{\rm ref})}\intall \frac{d\omega\, \tilde{\alpha}_{\rm in}(\omega)e^{i\omega t_{\rm ref}}}{\qty[\gamma_1+\gamma_2+2i\omega]^2}\,.
\end{align}
Using \erf{eqn:P_ref_density} for $\rho_{\rm ref}(t_{\rm ref})$, \erfa{P_ref_w}{rho_ref_w} for $P_{\rm ref}$ and the above equation for $\tau_{{\rm B},t_{\rm ref}}$ in the definition of $\tau_{\rm B}$ given in \erf{tau_B_definition}, $\tau_{\rm B}$ is then given by
\begin{align}
    &\tau_{\rm B}=\frac{1}{P_{\rm ref}}\intall dt_{\rm ref}\abs{\alpha_{\rm ref}(t_{\rm ref})}^2\tau_{{\rm B},t_{\rm ref}} \nn\\
    &=-\frac{4\gamma_1}{P_{\rm ref}}\intall d\omega\frac{\tilde{\alpha}_{\rm in}(\omega)\tilde{\alpha}^*_{\rm ref}(\omega)}{\qty[\gamma_1+\gamma_2+2i\omega]^2}\nn\\
    &=\dfrac{\displaystyle\int_{-\infty}^{\infty}d\omega\left(\dfrac{\gamma_1 - \gamma_2 + 2i\omega}{\gamma_1 + \gamma_2 + 2i\omega}\right)\dfrac{\abs{\tilde{\alpha}_{\rm{in}}(\omega)}^2}{[(\gamma_1+\gamma_2)^2 + 4\omega^2]}}{\dfrac{1}{4\gamma_1}\displaystyle\int_{-\infty}^{\infty}d\omega\,\left[\dfrac{(\gamma_1-\gamma_2)^2 + 4\omega^2}{(\gamma_1+\gamma_2)^2 + 4\omega^2}\right]\abs{\tilde{\alpha}_{\rm{in}}(\omega)}^2}\,, \label{eqn:H30}
\end{align}
where we have used \erf{field soln} for $\tilde{\alpha}^*_{\rm ref}(\omega)$. 

\section{Cavity and atom models correspondence}\label{sec:cavity_corr}

In this section we compare the cavity model to the atomic system, demonstrate that the Feynman-path calculation for $\tau_{\rm B}$ presented in~\srf{sec:discussion} agrees with the direct calculation of $\tau_{\rm B}$ given in~\arf{sec:cavity_model}, and derive the approximate expression $\tau_{\rm B}\approx-\od_0^c/\gamma_2$ mentioned in~\srf{sec:discussion}, which is valid in the limit of a narrow-band input pulse and low $\od_0^c$.

To make a comparison with the case of an atomic medium, 
we specialize to a resonant input pulse in the narrow-band limit, $\abs{\tilde{\alpha}_{\rm{in}}(\omega)}^2 = \delta(\omega)$. In this regime, \erf{eqn:H30} 
reduces to
\begin{equation}\label{eqn:tauB_decay_rates}
    \tau_{\rm B} = \rm{\dfrac{4\gamma_{1}}{\gamma_1^2 - \gamma_2^2}}\,.
\end{equation}
Now, using~\erfa{P_ref_w}{rho_ref_w}, the decay rates of the cavity can be related to the resonant `optical depth' of the cavity $\eta_0^c$ by
\begin{equation}\label{P_ref}
    P_{\rm B} := P_{\rm ref}= \bigg(\dfrac{\gamma_{1}-\gamma_{2}}{\gamma_{1}+\gamma_{2}}\bigg)^2 := e^{-\rm{\eta^c_0}}\,.
\end{equation}
Note that the probability of reflection from the cavity is defined as $e^{-\rm{\eta^c_0}}$ since reflection from the cavity is analogous to transmission through the atomic cloud.
Here, since \erf{P_ref} is symmetric with respect to $\gamma_1$ and $\gamma_2$, one of them needs to be chosen lesser than the other to express them in terms of $\od^c_0$.
Letting $\gamma_>$ ($\gamma_<$) denote the larger (smaller) of the two, we have
\begin{equation}\label{gamma_ratio}
    \dfrac{\gamma_<}{\gamma_>} = \dfrac{1 - e^{-\od^c_0/2}}{1 + e^{-\od^c_0/2}}\approx \frac{\od^c_0}{4}\,,
\end{equation}
where the approximation is valid for $\od^c_0\ll1$. 

In order to determine which choice ($\gamma_1<\gamma_2$ or $\gamma_1>\gamma_2$) better models the atomic problem, we note that the optical depth of a uniform medium of length $L$ and coupling constant $g_0$ is given by $\od_0=4g_0^2L/c\Gamma$. 
Since $\Gamma$ represents the loss rate in the atomic system and $\gamma_2$ represents the ``loss rate'' (transmission rate) in the cavity model, it is natural to draw an analogy between these rates. 
Similarly, $\gamma_1$ describes the coupling into the cavity, while $g_0^2L/c$ is a measure of the effective coupling rate into the atoms. 
We therefore choose $\gamma_1<\gamma_2$ so that $\od^c_0 = 4\gamma_1/\gamma_2$, in analogy with the atomic problem.
Using~\erf{gamma_ratio} in \erf{eqn:tauB_decay_rates}, the cavity dwell time for a reflected photon then becomes
\begin{align}\label{tau_B}
    \tau_{\rm B} &= -\left(e^{\eta_0^c/2}-e^{-\eta_0^c/2}\right)/\gamma_2 \nn \\
    &= -2\sinh\left(\eta_0^c/2\right)/\gamma_2\,.
\end{align}
For $\eta_0^c<<1$, up to first order in $\eta_0^c$ we have $\tau_{\rm B} \approx -\eta_0^c/\gamma_2$, in perfect analogy with the atomic case, for which we had $\ett \approx -\eta_0/\Gamma$.
The cavity model is thus a good analogy for the atomic cloud in the limit of low optical depth and for resonant, narrow-band pulses.

The above result can also be expressed in terms of the reflection ($r_1,r_2$) and transmission ($t_1,t_2$) coefficients of the cavity mirrors, where $r_j = \abs{r_j}$ and $t_j = i\abs{t_j}$. 
Firstly, the probabilities of the photon ultimately being reflected or transmitted can be found by summing the probability amplitudes for all Feynman paths leading to reflection and transmission as follows:
\begin{align}\label{Feyn-ref}
    P_{\rm B} &= \abs{r_1 + t_1r_2t_1 + t_1r_2r_1r_2t_1 + \cdots}^2 \nn \\
    &= \abs{r_1 -\abs{t_1}^2r_2\sum_{n=0}^{\infty}(r_1r_2)^n }^2\, \nn \\ 
    &= \left(r_1 - \dfrac{\abs{t_1}^2r_2}{1-r_1r_2}\right)^2 \nn \\ 
    &= \left(\dfrac{r_1 - r_2}{1-r_1r_2}\right)^2\,,
\end{align}
\begin{align}
    P_{\rm D} := P_{\rm tr} &= \abs{t_1t_2 + t_1r_2r_1t_2 + t_1r_2r_1r_2r_1t_2 + \cdots}^2 \nn \\
    &= \abs{-\abs{t_1}\abs{t_2}\sum_{n=0}^{\infty}(r_1r_2)^n }^2\, \nn \\ &= \left(\dfrac{\abs{t_1}\abs{t_2}}{1-r_1r_2}\right)^2\,.
\end{align}
In the above equations, $\abs{r_i}^2 + \abs{t_i}^2 = 1$ and $r_1r_2 < 1$ have been used.

Next, we wish to express the cavity decay rates $\gamma_1$ and $\gamma_2$ in terms of the reflection and transmission coefficients of the mirrors. The decay rate of mirror $j$ can be found by calculating the ring-down time of the field amplitude in the cavity---which is equal to $2/\gamma_j$, since $\gamma_j$ is the decay rate of the intensity---assuming the other mirror is perfectly reflecting. Hence, we have
\begin{equation}
    \dfrac{2}{\gamma_j} = \sum_{k = 0}^{\infty} (2k+1)\dfrac{\tau_{\rm rt}}{2} \dfrac{r_j^k}{\sum_{n = 0}^{\infty}r_j^n}= \dfrac{\tau_{\rm rt}}{2}\left(\dfrac{1+r_j}{1-r_j}\right)\,,
\end{equation}
where $\tau_{\rm rt}$ is the round-trip time of the cavity. 
Rearranging, we have
\begin{equation}\label{gamma-r relation}
    \gamma_i = \dfrac{4}{\tau_{\rm rt}}\left(\dfrac{1-r_i}{1+r_i}\right)\,.
\end{equation}
Substituting the decay rates into \erf{P_ref} gives
\begin{equation}\label{r-OD relation}
    P_{\rm B}= \bigg(\dfrac{\gamma_{1}-\gamma_{2}}{\gamma_{1}+\gamma_{2}}\bigg)^2 = \left(\dfrac{r_1 - r_2}{1-r_1r_2}\right)^2 = e^{-\od^c_0}\,,
\end{equation}
as expected from \erf{Feyn-ref}.  

From~\erf{r-OD relation} it is clear that $\od^c_0=0$ is achieved by taking either $r_1=1$ or $r_2=1$. Since we have chosen $\gamma_1 <\gamma_2$, \erf{gamma-r relation} implies that $r_2 < r_1$, and therefore the limit of $\od^c_0\ll1$ will correspond to $r_1\approx 1$.
Using \erfa{gamma-r relation}{gamma_ratio}, $r_1$ and $r_2$ can be expressed as
\begin{equation}
    r_2 = \dfrac{1 - \gamma_2\tau_{\rm rt}/4}{1 + \gamma_2\tau_{\rm rt}/4}\,,\hspace{5mm}r_1 = \dfrac{r_2 + e^{-\od^c_0/2}}{1 + r_2e^{-\od^c_0/2}}\,.
\end{equation}
Since the cavity is assumed to be infinitesimally small, $\tau_{\rm rt}$ is infinitesimally small. 
If $\gamma_2$ is assumed to be a finite constant, we then have $\gamma_2\tau_{\rm rt}\ll1$, which implies that $r_2\approx1$. 
So, we have $r_1\approx1$ and $r_2\approx1$ with $r_2<r_1$.
Using~\erf{gamma-r relation} in~\erf{eqn:tauB_decay_rates}, the cavity dwell time for a reflected photon can be found to be
\begin{equation}
    \tau_{\rm B} = -\dfrac{\tau_{\rm rt}}{4}\abs{t_1}^2\dfrac{(1+r_2)^2}{(r_1-r_2)(1-r_1r_2)}\,.
\end{equation}
For $r_2\approx 1$, this can be approximated as
\begin{equation}
    \tau_{\rm B} \approx -\tau_{\rm rt}\dfrac{r_2\abs{t_1}^2}{(1-r_1r_2)(r_1-r_2)}\,,
\end{equation}
in agreement with the pointer-shift calculation of $\tau_{\rm B}$ given in \erf{pointer tau_B} of the main text.
Furthermore, for $\od^c_0\ll1$ and $r_1\approx 1$, up to first order in $\od^c_0$ and $1 - r_1$ we have
\begin{equation}
    P_{\rm D} \approx \od^c_0 = 2(1-r_1)\left(\dfrac{1+r_2}{1-r_2}\right)
\end{equation}
and
\begin{equation}
    \tau_{\rm B} \approx -\dfrac{\tau_{\rm rt}}{2}(1-r_1)\left(\dfrac{1+r_2}{1-r_2}\right)^2\,.
\end{equation}
Hence, using the above equations and \erf{gamma-r relation}, it is clear that we have
\begin{equation}
    \tau_{\rm B} \approx -P_{\rm D}/\gamma_2 \approx -\od_0^c/\gamma_2\,.
\end{equation}

\bibliography{CvNC}

\end{document}